\newtheorem{thm}{Theorem}[section]
\newtheorem{lem}[thm]{Lemma}
\theoremstyle{definition}
\newtheorem{defn}[thm]{Definition}
\newtheorem{example}{Example}
\newtheorem{cor}[thm]{Corollary}
\newtheorem{prop}[thm]{Proposition}
\newtheorem{remark}[thm]{Remark}
\DeclareMathOperator{\Span}{sp}
\DeclareMathOperator{\Supp}{Supp}
\newcommand{\MC}{\mathcal}
\newcommand{\MB}{\mathbb}
\newcommand{\MBF}{\mathbf}
\newcommand{\TI}{\textit}
\newcommand{\bF}{\Bbb F}
\newcommand{\bZ}{\Bbb Z}
\begin{document}
\title{Higher genus universally decodable matrices  (UDMG)}
\author{Steve Limburg, David Grant, Mahesh K. Varanasi}
\address{Department of Mathematics, University of Colorado at Boulder,
Boulder, Colorado 80309-0395 USA}\email{limburg@colorado.edu}
\address{Department of Mathematics, University of Colorado at Boulder,
Boulder, Colorado 80309-0395 USA}\email{grant@colorado.edu}
\address{Department of Electrical and Computer Engineering, University of Colorado at Boulder,
Boulder, Colorado 80309-0425 USA} \email{varanasi@colorado.edu}
\date{\today}
\keywords{Universally decodable matrices, algebraic geometric codes.}
\subjclass[2010]{94B60, 94B05,11T71}

\thanks{The first author was partially supported by Department of Education GAANN grant P200A060220.}

\begin{abstract}


We introduce the notion of Universally Decodable Matrices of Genus $g$ (UDMG), 
which for $g=0$ reduces to the notion of Universally Decodable Matrices (UDM) introduced in \cite{VG}. A UDMG is a
set of $L$ matrices over a finite field $\MB F_q$, each with $K$ rows, and a linear independence condition satisfied by 
collections of  $K+g$ columns formed from the initial segments of the matrices. We consider the mathematical structure
of UDMGs and their relation to linear vector codes. 
We then give a construction of UDMG based on 
curves of genus $g$ over $\MB F_q$, which is a natural generalization of the UDM constructed in \cite{VG} from $\MB P^1$. 
We provide upper (and constructable lower) bounds for $L$ in terms of $K$, $q$, $g$, and the number of columns of the matrices.
We will show there is a fundamental trade off (Theorem \ref{bounddelta}) between $L$ and $g$, akin to the Singleton bound for the minimal
Hamming distance of linear vector codes.
\end{abstract}
\maketitle
\section*{Introduction}
Universally Decodable Matrices (UDM) over finite fields were introduced by Tavildar and Viswanath in \cite{Tal} to build examples of  {\it approximately universal codes} (defined below), which were designed to solve an important problem in coding over parallel channels in slow-fading wireless communications systems.
Recently Vontobel and Ganesan gave a general construction for UDMs in \cite{VG}.  (See also \cite{VG1}.)

In this paper we introduce a natural and useful generalization of UDMs we call {\it Universally Decodable Matrices of Genus $g$} (UDMGs).
We then generalize the construction of UDMs given in \cite{VG} and find bounds for the size of a UDMG that apply in a more general setting than that considered in \cite{Tal} and \cite{VG}.

Despite (or perhaps because of) their utilitarian origin, these sets of matrices can be studied as an abstract mathematical structure in their own right, and as such have a rich and beautiful theory, 
including relations to traditional linear vector codes --- which in some sense they generalize. 
Before we detail this structure, let us describe in more detail the communications problem
which inspired their consideration and to which they provide a solution.

\subsection*{Communication Motivation for UDMG}
First let us review the terminology we need from communications theory. 
Digital communication over a wireless channel takes place via the transmission of complex numbers whose magnitude and 
argument determine the amplitude and phase of the radio-frequency wave over which they are transmitted (the channel itself is randomly time-varying and is defined probabilistically).
 The radio wave is received by an antenna 
and sampled at the symbol rate,
 so if $x\in \Bbb C$ is the transmitted 
information-bearing complex-valued symbol, the corresponding discrete-time complex 
received signal will
be $y=hx+n$, where $h$ and $n$ are realizations of complex random variables, respectively called the {\it fading coefficient} and the {\it noise} 
of the channel. We assume that the noise is an additive complex Gaussian random variable
of mean 0 and variance 1. If the realization $h$ is constant over all $T$ timeslots that we will employ the channel, we say the channel is {\it slow-fading}.
A set of $L$ channels is called a {\it parallel} channel, and its elements are called its {\it subchannels}.
 An important example of a parallel channel is one that results in wide-band communication through the use of a technique called orthogonal frequency division multiplexing (OFDM) \cite{tse}.

Therefore given a set $W$ of messages (information), we can transmit it over $L$-parallel subchannels for $T$ 
timeslots via an injection 
$i:W\rightarrow \text{Mat}_{L\times T}(\Bbb C)$.
There is a great deal of application-specific engineering that goes into constructing $i$, and it is useful to write it as the composition of an {\it encoding} map 
$\kappa$ from $W$ into a set $C$ of {\it codewords}, and a map $\mu: C\rightarrow \text{Mat}_{L\times T}(\Bbb C)$ called {\it modulation}.
 We will call the quadruple $(W,\kappa,C,\mu)$ a
 {\it coding scheme} (or just a {\it code}). The {\it rate} of the code is $\log_2|W|/T$. The {\it power} of the code is 
${1\over{T |C|}}\displaystyle \sum_{x\in C}||\mu(x)||
 ^2,$ where $||~ \cdot ~||$ denotes the Frobenius norm, which by our normalizing choice of the noise
is the same as the {\it signal-to-noise ratio} (SNR), which we denote as $SNR(C,\mu)$. 

Recently \cite{Tal} gave a definition of what it means for a sequence of codes for a slow-fading parallel channel to be ``approximately universal"  (for the experts, this was
meant to capture the notion of what it means for the sequence of codes to optimally trade off diversity and multiplexing gain, no matter the choice of the distribution of the fading coefficients). 
So as to not bring us too far afield, we will use an operational definition of approximately universal given in Theorem $5.1$ of \cite{Tal}:

Suppose we have a slow-fading parallel channel with $L$ subchannels.
For each natural number $n$, suppose we have a coding scheme $(W_n,\kappa_n,C_n,\mu_n)$ for our channel,
employed for $T$ timeslots, of rate $R_n$, and signal-to-noise ratio $SNR_n$, such that $SNR_n$ tends to $\infty$ as $n$ does.  Then we say the
sequence is {\it approximately universal} if for every pair of distinct $T$-tuples of codewords $v,w\in C_n^{T}$,
$$\prod_{1\leq i\leq L} \parallel d_i \parallel^2 \geq \frac{1}{2^{R_n+o(\log(SNR_n))}},$$ 
where $d_i$ is the $i^{th}$-row of the $L\times T$  matrix $(\mu_n(v)-\mu_n(w))/\sqrt{SNR_n}$. 

Given a coding scheme $(W,\kappa,C,\mu)$ for one timeslot, for any $T$,
we can extend $\kappa$ and $\mu$ entry-by-entry to functions $\kappa^T$ and
$\mu^T$ of the vectors $W^T$ and $C^T$, to get the {\it $T$-iterated} coding scheme $(W^T,\kappa^T,C^T,\mu^T)$ for $T$
timeslots.
Using the arithmetic-geometric mean inequality as in the proof of the following Lemma, it is not hard to see that if a sequence 
of coding schemes for one timeslot is approximately universal, then for any $T$, the corresponding sequence of $T$-iterated coding
schemes is
approximately-universal. So for the purpose of building examples of sequences of approximately universal coding schemes,
it suffices to build examples for one timeslot. So we will assume henceforth that $T=1$. We also need the following simplification:

\begin{lem}\label{complex} 
For each natural number $n$, suppose we have a coding scheme $(W_n,C_n,\kappa_n,\mu_n)$ for our parallel channel with $L$-subchannels, 
of rate $R_n$, and signal-to-noise ratio $SNR_n$, such that $SNR_n$ tends to $\infty$ as $n$ does, and that $\mu_n$ is real-valued.
Suppose that for every pair of distinct codewords $v,w\in C_n$,
$$\prod_{1\leq i\leq L} d_i ^2 \geq \frac{1}{2^{2R_n+o(\log(SNR_n))}},$$ 
where $d_i$ is the $i^{th}$-entry of the vector $(\mu_n(v)-\mu_n(w))/\sqrt{SNR_n}$.
Then the {\it complexified} coding scheme $$(W_n\times W_n,\kappa_n\times\kappa_n,C_n\times C_n,\mu_n\times 0+0\times i\mu_n)$$
is approximately universal.
\end{lem}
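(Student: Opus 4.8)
The plan is to verify directly that the complexified sequence meets the operational definition of approximately universal recalled above, with rate $2R_n$ and signal-to-noise ratio $2\,SNR_n$. Write $\widetilde{\mu}_n$ for the complexified modulation, so $\widetilde{\mu}_n(c_1,c_2)=\mu_n(c_1)+i\,\mu_n(c_2)\in\text{Mat}_{L\times 1}(\Bbb C)$ (recall $T=1$). First I would dispatch the two global computations. The rate is $\log_2|W_n\times W_n|/T=2\log_2|W_n|=2R_n$; note this is exactly why the factor $2$ appears in the exponent of the hypothesis. For the power, since $\mu_n$ is real-valued one has $\|\mu_n(c_1)+i\,\mu_n(c_2)\|^2=\|\mu_n(c_1)\|^2+\|\mu_n(c_2)\|^2$ with no cross term, and averaging over $C_n\times C_n$ gives $SNR(C_n\times C_n,\widetilde{\mu}_n)=2\,SNR_n$, which still tends to $\infty$. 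Since $\log(2\,SNR_n)=\log SNR_n+O(1)$, a quantity is $o(\log(2\,SNR_n))$ iff it is $o(\log SNR_n)$, so the two error terms are interchangeable in what follows.

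Next I would fix distinct codewords $v=(v_1,v_2),w=(w_1,w_2)$ of $C_n\times C_n$ and let $a_i$, $b_i$ denote the $i$-th entries of $\mu_n(v_1)-\mu_n(w_1)$ and $\mu_n(v_2)-\mu_n(w_2)$. Again using that $\mu_n$ is real-valued, the $i$-th entry of $(\widetilde{\mu}_n(v)-\widetilde{\mu}_n(w))/\sqrt{2\,SNR_n}$ has squared modulus $(a_i^2+b_i^2)/(2\,SNR_n)$, so
\[
\prod_{1\le i\le L}\|d_i\|^2=\frac{1}{(2\,SNR_n)^L}\prod_{1\le i\le L}(a_i^2+b_i^2),
\]
and the problem reduces to a lower bound for $\prod_i(a_i^2+b_i^2)$. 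As $v\ne w$, at least one of $v_1\ne w_1$, $v_2\ne w_2$ holds. If both hold, the hypothesis applied to the two components gives $\prod_i a_i^2\ge SNR_n^{L}/2^{2R_n+o(\log SNR_n)}$ and the same for $\prod_i b_i^2$; the arithmetic--geometric mean inequality $a_i^2+b_i^2\ge 2|a_i b_i|$ then yields $\prod_i(a_i^2+b_i^2)\ge 2^{L}\sqrt{\prod_i a_i^2}\,\sqrt{\prod_i b_i^2}\ge 2^{L}SNR_n^{L}/2^{2R_n+o(\log SNR_n)}$, and substituting into the display gives exactly $\prod_i\|d_i\|^2\ge 1/2^{2R_n+o(\log SNR_n)}$. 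If only one component differs, say $v_1\ne w_1$, I would instead use the trivial bound $a_i^2+b_i^2\ge a_i^2$ and the hypothesis for the first component alone; this costs a factor $2^{-L}$, i.e.\ replaces the exponent by $2R_n+L+o(\log SNR_n)$, but $L$ is constant and $\log SNR_n\to\infty$, so $L+o(\log SNR_n)=o(\log SNR_n)$ and the bound persists. Together with the rate and SNR computed above, this establishes approximate universality.

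I do not expect a deep obstacle here; the work is entirely bookkeeping. The points that need care are: confirming the doubled codeword set produces precisely the exponent $2R_n$ appearing in the hypothesis; invoking real-valuedness of $\mu_n$ so that $\|\mu_n(c_1)+i\,\mu_n(c_2)\|^2$ splits without a cross term (this underlies both the SNR computation and the clean formula for $|d_i|^2$); and absorbing harmless constants --- the $2^{-L}$ from the degenerate case and the $\log 2$ from rescaling the SNR --- into $o(\log SNR_n)$, which is legitimate exactly because $SNR_n\to\infty$. The one spot where a careless argument could slip is the case distinction on which component(s) of the paired codeword differ, since the sharp AM--GM estimate is available only when both do.
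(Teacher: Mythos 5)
Your proof is correct and follows essentially the same route as the paper's: compute that the complexified scheme has rate $2R_n$ and signal-to-noise ratio $2\,SNR_n$, handle the degenerate case where one component of the paired codeword is unchanged via the trivial bound (absorbing the resulting $2^L$ into $o(\log SNR_n)$), and in the generic case apply the arithmetic--geometric mean inequality entrywise to reduce to two applications of the hypothesis. Your writeup is if anything a bit more explicit than the paper's (which has a small typo, writing $v_1\neq v_2$ and $w_1\neq w_2$ where it means $v_1\neq w_1$ and $v_2\neq w_2$), but there is no substantive difference.
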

\begin{proof}
First of all,  $SNR(C_n\times C_n,\mu_n\times 0+0\times i\mu_n)=2SNR(C_n,\mu_n)$ and the rate of $(W_n\times W_n,\kappa_n\times\kappa_n,C_n\times C_n,\mu_n\times 0+0\times i\mu_n)$ is $2R_n$. So for distinct vectors $(v_1,v_2), (w_1,w_2)\in C_n\times C_n,$ we need to compute a lower bound for
the $i^{th}$ entry of $$((\mu_n(v_1)-\mu_n(w_1))^2+(\mu_n(v_2)-\mu_n(w_2))^2)/2 SNR_n,$$ and multiply over all $1\leq i\leq L$. If $v_1=w_1$, we see
that $\frac{1}{2^{2R_n+L+o(\log(SNR_n))}}$ is a lower bound for this product, which is of the form needed for the definition of approximately universal. A similar bound holds if $v_2=w_2$,
so now assume that $v_1\neq v_2$ and that $w_1\neq w_2$.
Then the arithmetic-geometric-mean inequality gives:
$$\prod_{1\leq i\leq L}{{(\mu_n(v_1)-\mu_n(w_1))^2+(\mu_n(v_2)-\mu_n(w_2))^2}\over{2 SNR_n}}\geq$$
$$\prod_{j=1,2}(\prod_{1\leq i\leq L}{{(\mu_n(v_j)-\mu_n(w_j))^2}\over {SNR_n}})^{1/2}
\geq \frac{1}{2^{2R_n+f(n)}},$$
where $f(n)$ is a function of $n$ that is $o(\log SNR_n)$ so is $o(\log 2 SRN_n)$.
\end{proof}

UDMs were constructed in \cite{Tal} because they can be used to build a sequence of  approximately universal codes. We will now show how to generalize
this construction.
Let $q$ be a power of a prime, $\mathbb{F}_q$ the field with $q$ elements, and $N$ a natural number. 

Let $\mathcal M=\{M_i|1\leq i\leq L\}$ be a collection of  $N\times N$ matrices with entries in $\mathbb{F}_q$. If $g$ is
a non-negative integer, we say that $\MC M$ is a set of (square) Universally Decodable Matrices of Genus $g$ (UDMG)
of length $L$
if for every $L$-tuple  $(\lambda_1,...,\lambda_L)$ of non-negative integers,
the matrix formed by concontenating the first $\lambda_i$ columns of $M_i$ is of full rank
whenever  $\displaystyle \sum_{i=1}^{L} \lambda_i \geq N+g$.

Assume now for every $N$ greater than some $N_0$, we have a UDMG $\mathcal M=\{M_i|1\leq i\leq L\}$ of length $L$ such that $L(N_0-g)\geq N_0$.
Let $K_i\in \mathbb{F}_q^N$ be the kernel of the linear transformation $\rho_i: \mathbb{F}_q^N\rightarrow  \mathbb{F}_q^N$ given by multiplication by $M_i$.
Since $M_i$ has rank at least $N-g$ by design, the dimension of $K_i$ is some $\delta_i\leq g$. Let $K$ be the span of $K_i$ for $1\leq i\leq L$,
and $W_N$ be a complementary space in $ \mathbb{F}_q^N$ to $K$, that is, $W_N+K=\mathbb{F}_{q^N}$ and $W_N\cap K=\{0\}$. Since the
dimension of $K$ is some $\Delta\leq \sum_{i=1}^L\delta_i\leq gL$, the dimension of $W_N$ is at least $N-Lg$.
 Note that by construction, $\rho_i$ is injective when restricted to $W_N$.
We then define $\kappa_N(v)=\{\rho_1(v),...,\rho_L(v)\}$ for $v\in W_N$, and
$C_n=\kappa_N(W_N)\subseteq \text{Mat}_{N\times L}( \mathbb{F}_q^N)$.
The modulation map $\mu_N:C_n\rightarrow \Bbb C^L$ will be the column-by-column extension of a map $\mu_0: \mathbb{F}_q^N\rightarrow \Bbb C$,
which we will now describe in detail. Given the result of Lemma~\ref{complex}, there is no reason 
not to build our example with $\mu_0$ being real-valued.

There is a standard map $p_{q^N}: \mathbb{F}_q^N\rightarrow \Bbb R$, built as follows. Arbitrarily identify $\mathbb{F}_q$ with $I_q=\{0,1,...,q-1\}$ and extend this
identification entry-by-entry from $\mathbb{F}_q^N\rightarrow I_q^N$. Now for any $a=(a_1,...,a_N)\in I_q^N$, let $p_{q^N}(a)=$
$$a_1q^{N-1}+\cdots+a_{N-1}q+a_N-{{q^N-1}\over {2}}$$ $$=(a_1-{{q-1}\over 2})q^{N-1}+\cdots+ (a_{N-1}-{{q-1}\over 2})q+(a_N-{{q-1}\over 2}).$$
This maps $I_q^N$ to $q^N$ unit-spaced points on the real line symmetrically placed about the origin. The map $p_{q^N}$ is standardly
called $q^N$-{\it PAM} (pulse-amplitude modulation). 
The modulation map we need to take is a weighted version of $q^N$-PAM.

We define $\mu_0(a_1,...a_N)=$
$$(a_1-{{q-1}\over 2})q^{N-1}w_1+\cdots +(a_{N-1}-{{q-1}\over 2})qw_{N-1}+(a_N-{{q-1}\over 2})w_N,$$
where  $w_i=(1+{{(q-1)(N+1-i)+1}\over{qN}})$ for $1\leq i\leq L$.
Note that $1\leq w_i\leq 2$. 

The reason for these weights is the following:
\begin{lem}\label{gap}
If two codewords $a=(a_1,...a_L)$ and $b=(b_1,...,b_L)$ in $C_N$ have $a_i=b_i$ for $i=1,...,m$ for some $1\leq m< L$, but $a_{m+1}\neq b_{m+1},$ then 
$$|\mu_0(a)-\mu_0(b)| >  q^{(N-m-1)}/N.$$
\end{lem}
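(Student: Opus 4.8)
The plan is to view $\mu_0$ as ordinary $q^N$-pulse amplitude modulation perturbed by the weights, and to show that the perturbation, while harmless to the PAM gap, is itself bounded below by exactly the target quantity. Write $w_i = 1 + \epsilon_i$ with $\epsilon_i = \frac{(q-1)(N+1-i)+1}{qN}>0$; note that $\epsilon_i$ decreases by $\frac{q-1}{qN}$ each time $i$ increases by one and that $\epsilon_N = 1/N$. Identify $\mathbb{F}_q$ with $I_q$, write $a=(a_1,\dots,a_N)$, $b=(b_1,\dots,b_N)$, and set $c_i = a_i-b_i \in\{-(q-1),\dots,q-1\}$; the hypotheses are $c_i=0$ for $i\le m$ and $c_{m+1}\neq 0$. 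Since the left side of the claimed inequality is symmetric in $a,b$ and the right side does not involve them, we may assume $c_{m+1}\ge 1$, and then it suffices to bound $\mu_0(a)-\mu_0(b)=\sum_{i=m+1}^{N} c_i q^{N-i}w_i$ from below.

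Using $c_{m+1}\ge 1$, $c_i\ge -(q-1)$, and $w_i>0$ termwise, I would bound
$$\mu_0(a)-\mu_0(b)\ \ge\ q^{N-m-1}w_{m+1}-(q-1)\sum_{i=m+2}^{N}q^{N-i}w_i ,$$
and then substitute $w_i=1+\epsilon_i$ so that the right-hand side splits as a ``PAM part'' plus a ``weight part'':
$$\Big(q^{N-m-1}-(q-1)\sum_{i=m+2}^{N}q^{N-i}\Big)\ +\ \Big(q^{N-m-1}\epsilon_{m+1}-(q-1)\sum_{i=m+2}^{N}q^{N-i}\epsilon_i\Big).$$
The first bracket is $q^{N-m-1}-(q^{N-m-1}-1)=1$ (the familiar fact that distinct $q^N$-PAM points are at distance at least $1$), and the second bracket is exactly $q^{N-m-1}/N$. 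Hence $\mu_0(a)-\mu_0(b)\ge 1 + q^{N-m-1}/N>q^{N-m-1}/N$, which gives the assertion (in particular $\mu_0(a)\neq\mu_0(b)$, so no absolute-value subtlety remains after the reduction $c_{m+1}\ge 1$).

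Thus the only real work is the identity
$$q^{N-m-1}\epsilon_{m+1}-(q-1)\sum_{i=m+2}^{N}q^{N-i}\epsilon_i=\frac{q^{N-m-1}}{N},$$
which is where I expect to spend the effort. Substituting $j=N-i$ rewrites the sum as $\frac{1}{qN}\sum_{j=0}^{N-m-2}q^{j}\big((q-1)(j+1)+1\big)$, and the closed forms for $\sum q^j$ and $\sum (j+1)q^j$ collapse it to $\frac{(N-m-1)q^{N-m-1}}{qN}$; since $\epsilon_{m+1}=\frac{(q-1)(N-m)+1}{qN}$, the difference simplifies to $\frac{q^{N-m-1}}{qN}\big((q-1)+1\big)=q^{N-m-1}/N$. (Alternatively, $\epsilon_{m+1}-\epsilon_N=(N-m-1)\frac{q-1}{qN}$ lets one see the cancellation directly from the geometric sum.) The main obstacle is purely bookkeeping: one must not replace $\sum c_i q^{N-i}w_i$ by the lossy estimate $2(q-1)\sum q^{N-i}$; keeping the sign of the leading coefficient $c_{m+1}$ is essential, since it is precisely the balance between $q^{N-m-1}$ and the geometric tail $(q-1)\sum_{i>m+1}q^{N-i}$ that produces the unit gap, and it is precisely the arithmetic-progression shape of the weights that makes the perturbation contribute the extra $q^{N-m-1}/N$.
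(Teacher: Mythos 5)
Your proof is correct and follows essentially the same route as the paper: reduce to the extremal configuration $c_{m+1}=1$, $c_i=-(q-1)$ for $i>m+1$, then collapse the resulting sum via the closed forms for $\sum q^j$ and $\sum (j+1)q^j$ to obtain the exact value $1+q^{N-m-1}/N$. Your split $w_i=1+\epsilon_i$ into a ``PAM part'' (contributing the unit gap) and a ``weight part'' (contributing $q^{N-m-1}/N$) is a tidy repackaging of the paper's single calculation, and your handling of the absolute value (reduce by symmetry to $c_{m+1}\ge 1$, then observe the lower bound is positive) is a small clarification of the paper's ``WLOG $\mu_0(a)>\mu_0(b)$'' step, but the underlying argument is the same.
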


\begin{proof} Without loss of generality, we can assume $\mu_0(a)>\mu_0(b)$, and then $\mu_0(a)-\mu_0(b)$ is minimized when $a_{m+1}=b_{m+1}+1$, and
$a_i=0,b_i=q-1$ for $m+2\leq i\leq N$. Hence $|\mu_0(a)-\mu_0(b)| \geq$ $$w_{m+1} q^{N-m-1}-(q-1)\sum_{k=1}^{N-m-1}w_{m+1+k}q^{N-m-1-k}=
1+q^{N-m-1}/N,$$
using the combinatorial identities $\sum_{i=0}^\ell x^i =(x^{\ell+1}-1)/(x-1)$ and $x$ times its derivative: $\sum_{i=1}^{\ell} ix^i={x\over {(x-1)^2}}(\ell x^{\ell+1}-(\ell+1)x^\ell+1).$
\end{proof}

Hence it is also appropriate to refer to $\mu_0$ as a {\it gapped} version of $q^N$-PAM, as they do in \cite{Tal}.

\begin{lem}\label{snr}
There are positive constants $\alpha$ and $\beta$ that depend on $q,g,$ and $L$, but not on $N$, such that,
$${{\alpha q^{2N}}\over N^2}\leq SNR(C_N,\mu_N)\leq \beta q^{2N}.$$
\end{lem}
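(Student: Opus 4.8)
The plan is to read $SNR(C_N,\mu_N)$ off the definition and handle the two inequalities separately. Since $\rho_1$ is injective on $W_N$, so is $\kappa_N$, whence $|C_N|=|W_N|=q^{d}$ with $d:=\dim_{\mathbb F_q}W_N\ge N-Lg$; and because $\mu_0$ is real-valued and $\mu_N$ acts columnwise,
\[
SNR(C_N,\mu_N)=\frac{1}{q^{d}}\sum_{v\in W_N}\ \sum_{i=1}^{L}\mu_0(\rho_i(v))^{2}.
\]
The upper bound is then immediate: writing the entries of $a\in\mathbb F_q^{N}$ in $\{0,\dots,q-1\}$, the triangle inequality together with $|a_i-\tfrac{q-1}{2}|\le\tfrac{q-1}{2}$ and $w_i\le 2$ gives $|\mu_0(a)|\le(q-1)\sum_{i=1}^{N}q^{N-i}=q^{N}-1<q^{N}$ for every $a$, so each inner sum is below $Lq^{2N}$ and $SNR(C_N,\mu_N)<Lq^{2N}$; thus $\beta=L$ works.

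For the lower bound I would discard all summands except $i=1$. As $\rho_1$ is injective on $W_N$, the map $v\mapsto\rho_1(v)$ carries $W_N$ bijectively onto a set $V\subseteq\mathbb F_q^{N}$ with $|V|=q^{d}$, so $SNR(C_N,\mu_N)\ge q^{-d}\sum_{a\in V}\mu_0(a)^{2}$. Now invoke Lemma~\ref{gap}: if $a,b\in I_q^{N}$ are distinct and $m$ is the largest index on which they agree initially, then (also by the identical computation when $m=0$) $|\mu_0(a)-\mu_0(b)|>q^{N-m-1}/N\ge 1/N$; in particular $\mu_0$ is injective. Hence $\mu_0(V)$ is a set of $q^{d}$ distinct reals in $(-q^{N},q^{N})$ with consecutive gaps exceeding $1/N$, so any interval of length $2t$ meets $\mu_0(V)$ in fewer than $2tN+1$ points. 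Taking $t=q^{d}/(4N)$, once $q^{d}\ge 4$ fewer than $q^{d}/2+1\le\tfrac34 q^{d}$ of the points $\mu_0(a)$ satisfy $|\mu_0(a)|\le t$, so more than $q^{d}/4$ of them satisfy $|\mu_0(a)|>q^{d}/(4N)$; therefore $\sum_{a\in V}\mu_0(a)^{2}>\tfrac{q^{d}}{4}\bigl(\tfrac{q^{d}}{4N}\bigr)^{2}=\tfrac{q^{3d}}{64N^{2}}$. Dividing by $q^{d}$ and using $d\ge N-Lg$ yields
\[
SNR(C_N,\mu_N)\ \ge\ \frac{q^{2d}}{64N^{2}}\ \ge\ \frac{q^{2N}}{64\,q^{2Lg}\,N^{2}},
\]
so, shrinking the constant to absorb the finitely many $N$ with $q^{d}<4$, one may take $\alpha=1/(64\,q^{2Lg})$.

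The whole content sits in the one step of the middle paragraph asserting that the $q^{d}$ reals $\mu_0(a)$, $a\in V$, cannot cluster near the origin; this is exactly where the engineered weights $w_i$ enter (through the near-uniform separation of Lemma~\ref{gap}) and combine with the dimension bound $d\ge N-Lg$, which forces $|V|=q^{d}\ge q^{N-Lg}$. I expect that to be the only real obstacle. It is worth noting the estimate is slightly wasteful: the proof of Lemma~\ref{gap} in fact gives separation $>1$ (worst case $m=N-1$), and rerunning the count with gaps $>1$ rather than $>1/N$ upgrades the conclusion to $SNR(C_N,\mu_N)\ge q^{2N}/(64\,q^{2Lg})$, removing the $N^{-2}$; the weaker form stated is retained only because it already suffices downstream, where the $o(\log SNR_n)$ slack in the definition of approximate universality makes any polynomial-in-$q^{N}$ factor irrelevant.
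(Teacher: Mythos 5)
Your proof is correct, and the lower-bound argument is genuinely different from the paper's. The paper splits on the parity of $q$, picks an explicit near-central point $z_0$ (two points $z_0,z_1$ when $q$ is even), and lower-bounds $\sum_{z\in Z_i}|\mu_0(z)-\mu_0(z_0)|^2$ via Lemma~\ref{gap} together with a worst-case rearrangement (the sum is smallest when $Z_i$ hugs $z_0$), closing with a geometric series. You instead keep only the $i=1$ summand of $SNR$, use Lemma~\ref{gap} merely to deduce that the $q^d$ reals $\mu_0(a)$, $a\in\rho_1(W_N)$, are pairwise separated by more than $1/N$, and run a packing/Markov argument: a constant fraction of them must lie at distance on the order of $q^d/N$ from the origin, which already gives the claimed $q^{2N}/N^2$ growth. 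Your route avoids both the parity case-split and the choice of reference point, and, as you observe, invoking the full strength of the gap lemma (the displayed computation in its proof actually gives separation $\geq 1+q^{N-m-1}/N>1$) removes the $N^{-2}$ entirely; what the paper's route buys is a factor of $L$ in the constant from summing over all channels, which you discard. Your upper bound is also slightly cleaner (the pointwise bound $|\mu_0|<q^N$ gives $\beta=L$ directly, while the paper enlarges $Z_i$ to all of $I_q^N$ and lands on $\beta=Lq^{Lg}$), though the underlying estimate is the same. Both arguments are sound; only the constants differ, and the statement does not care about those.
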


\begin{proof}
First note that $SNR(C_N,\mu_N)=$
$${1\over |C_N|}\sum_{c\in C_N}||\mu_N(c)||^2={1\over |W_N|}\sum_{v\in W_N}\sum_{i=1}^L\mu_0(\rho_i(v))^2=\sum_{i=1}^L \sigma_i$$
where
$$\sigma_i={1\over {|W_N|}}\sum_{v\in W_N}\mu_0(\rho_i(v))^2={1\over {|Z_i|}}\sum_{z\in Z_i}\mu_0(z)^2,$$
where $Z_i=\rho_i(W_N)$.

To get an upper bound on $\sigma_i$ we note:
$${1\over {|Z_i|}}\sum_{z\in Z_i}\mu_0(z)^2\leq {1\over {q^{N-Lg}}}\sum_{z\in I_q^N}\mu_0(z)^2$$
$$\leq {1\over{q^{N-Lg}}}\sum_{a\in I_q^N}((a_1-{{q-1}\over 2})q^{N-1}w_1+\cdots+ (a_{N-1}-{{q-1}\over 2})qw_{N-1}+(a_N-{{q-1}\over 2})w_N)^2$$
$$={1\over{q^{N-Lg}}}\sum_{a\in I_q^N}((a_1-{{q-1}\over 2})q^{N-1})^2w_1^2+\cdots+ ((a_{N-1}-{{q-1}\over 2})q)^2w_{N-1}^2+((a_N-{{q-1}\over 2}))^2w_N^2,$$
the sum of the cross terms vanishing because of the invariance of the set $\{a-(q-1)/2|a\in I_q\}$ under negation. Since $w_i\leq 2$, $\sigma_i$
is bounded above by
%
$$\frac{4}{q^{N-Lg}}\sum_{a\in I_q^N}((a_1-\frac{q-1}{2})q^{N-1})^2+\cdots+((a_N-\frac{q-1}{2}))^2\leq$$
$$ \frac{4}{q^{N-Lg}}
\sum_{a\in I_q^N} (\frac{q-1}{2})^2q^{2N-2}+\cdots+(\frac{q-1}{2})^2=\frac{4}{q^{N-Lg}}q^N(\frac{q-1}{2})^2(q^{2N-2}+\cdots+1)$$
$$=4q^{Lg}(\frac{q-1}{2})^2\frac{q^{2N}-1}{q^2-1}\leq q^{Lg}(\frac{q-1}{q+1})q^{2N}.$$
Summing over $1\leq i\leq L$, we get $SNR(C_N,\mu_N)\leq \beta q^{2N},$ where 
$\beta=Lq^{Lg}$.

The lower bound for $\sigma_i$ depends of the parity of $q$. First let us assume $q$ is odd, and set $z_0=({{q-1}\over 2},...,{{q-1}\over 2})\in I_q^N$. Then
$\mu_0(z_0)=0$, so
$$\sigma_i={1\over {|Z_i|}}\sum_{z\in Z_i}\mu_0(z)^2\geq {1\over {q^{N-\Delta}}}\sum_{z\in Z_i}|\mu_0(z)-\mu_0(z_0)|^2
\geq {1\over {q^{N-\Delta}}}\sum_{z\in Z_i} {{q^{2(N-m_0(z)-1)}}\over N^2},$$
by Lemma ~\ref{gap}, where $m_0(z)$ is the number of initial entries where $z$ and $z_0$ agree. Each addend decreases in size as $m_0(z)$ increases,
so if $Z'$ is the subset of elements in $I_q^N$ which agree with $z_0$ for their first initial $\Delta$ entries, then we have
$$\sigma_i\geq {1\over {q^{N-\Delta}}}\sum_{z\in Z'} {{q^{2(N-m_0(z)-1)}}\over N^2}=
{1\over {N^2q^{N-\Delta}}}(q-1)(q^{3(N-\Delta-1)}+\cdots q^3+1)$$ $$={{q-1}\over {(q^3-1)N^2q^{N-\Delta}}}(q^{3(N-\Delta)}-1)\geq 
{1\over {(3q^2)N^2q^{N-\Delta}}}q^{3(N-\Delta)}/2.$$
Summing over $1\leq i\leq L$, we see that when $q$ is odd we can take $\alpha=L/6q^{2gL+2}$.

When $q$ is even, $\mu_0(z)$ for $z\in I_q^N$ is minimized when $z$ is $z_0=(q/2+1,...,q/2+1)$ or $z_1=(q/2,...,q/2)$. Hence
$$\sigma_i={1\over {|Z_i|}}\sum_{z\in Z_i}\mu_0(z)^2\geq {1\over {q^{N-\Delta}}}\sum_{z\in Z_i}(\mu_0(z)^2-\mu_0(z_0)^2)=
$$ $$={1\over {q^{N-\Delta}}}\sum_{z\in Z_i}|\mu_0(z)-\mu_0(z_0)||\mu_0-\mu_0(z_1)|
\geq {1\over {q^{N-\Delta}}}\sum_{z\in Z_i} {{q^{(N-m_0(z)-1)+(N-m_1(z)-1)}}\over N^2},$$
by Lemma ~\ref{gap}, where $m_0(z)$ and $m_1(z)$ are respectively the number of initial entries where $z$ agrees with  $z_0$ and $z_1$. 
Note that either $m_0(z)$ or $m_1(z)$ vanishes, since $z_0$ and $z_1$ differ in all entries.
Again, each addend decreases in size as $m_0(z)$ or $m_1(z)$ increases,
so if $Z'$ is the subset of elements in $I_q^N$ which agree with $z_0$ or $z_1$ for their first initial $\Delta+1$ entries, then we have
$$\sigma_i\geq {1\over {q^{N-\Delta}}}\sum_{z\in Z'} {{q^{(N-m_0(z)-1)+(N-m_1(z)-1)}}\over N^2}=
{{q^{N-1}}\over {N^2q^{N-\Delta}}}2(q-1)(q^{2(N-\Delta-2)}+\cdots q^2+1)$$ $$={{2(q-1)q^{\Delta-1}}\over {(q^2-1)N^2}}(q^{2(N-\Delta-1)}-1)\geq 
{{q^{\Delta-1}}\over {(2q)N^2}}q^{2(N-\Delta-1)}.$$
Summing over $1\leq i\leq L$, we see that when $q$ is even we can take $\alpha=L/2q^{gL+4}$.





\end{proof}

\begin{thm}
Fix $L>0, g\geq 0$. Suppose  for some $N_0$, for $N\geq N_0$ we have a sequence of UDMG $\mathcal M_N$ of genus $g$ of size $N\times N$ and length $L$,
with $L(N_0-g)\geq N_0$.
Then the corresponding sequence of codes  $(W_N,\kappa_N,C_N,\mu_N)$ built from $\mathcal M_N$ as above is approximately universal.
\end{thm}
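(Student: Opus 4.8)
The plan is to reduce to the criterion of Lemma~\ref{complex}. Since the codes $(W_N,\kappa_N,C_N,\mu_N)$ produced by the construction have $\mu_N$ real-valued, it suffices to exhibit, for each $N\ge N_0$, a function $f(N)=o(\log SNR(C_N,\mu_N))$ such that
\[
\prod_{i=1}^L d_i^2\ \ge\ 2^{-2R_N-f(N)},\qquad d_i=\frac{\mu_0(\rho_i(v))-\mu_0(\rho_i(w))}{\sqrt{SNR(C_N,\mu_N)}},
\]
holds for every pair of distinct codewords, the latter coming from distinct $v,w\in W_N$; Lemma~\ref{complex} then yields that the (complexified) sequence is approximately universal.

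To get this bound I would set $u=v-w\neq0$, note $u\in W_N$, and for $1\le i\le L$ let $m_i$ be the number of leading zero entries of $\rho_i(u)\in\mathbb F_q^N$. Since $\rho_i$ is injective on $W_N$ we have $\rho_i(u)\neq0$, so $0\le m_i\le N-1$, and $\rho_i(v),\rho_i(w)$ agree in exactly their first $m_i$ coordinates and then differ. Lemma~\ref{gap} (the case $m_i=0$ being covered by the same computation) gives $|\mu_0(\rho_i(v))-\mu_0(\rho_i(w))|>q^{N-m_i-1}/N$, hence
\[
\prod_{i=1}^L d_i^2\ >\ \frac{q^{\,2\sum_{i=1}^L(N-m_i-1)}}{N^{2L}\,SNR(C_N,\mu_N)^L}.
\]
The crux is to bound $\sum_i m_i$ using the UDMG hypothesis. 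That $\rho_i(u)$ has $m_i$ leading zeros means $u$ is orthogonal to the first $m_i$ columns of $M_i$; so the single nonzero vector $u$ is orthogonal to the column span of the $N\times\sum_i m_i$ matrix obtained by concatenating, for $1\le i\le L$, the first $m_i$ columns of $M_i$. That matrix therefore does not have full rank, so the defining property of a UDMG of genus $g$ forces $\sum_{i=1}^L m_i\le N+g-1$. Plugging $\sum_i(N-m_i-1)\ge LN-L-(N+g-1)$ into the previous display gives
\[
\prod_{i=1}^L d_i^2\ >\ \frac{q^{\,2(LN-N-L-g+1)}}{N^{2L}\,SNR(C_N,\mu_N)^L}.
\]

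It remains to insert the size estimates and check the error term. By Lemma~\ref{snr}, $SNR(C_N,\mu_N)^L\le\beta^Lq^{2LN}$; and since $|C_N|=|W_N|=q^{\dim W_N}$ with $\dim W_N\ge N-Lg$ by the construction, $2^{2R_N}=q^{2\dim W_N}\ge q^{2(N-Lg)}$. Multiplying the last display by $2^{2R_N}$, the $N$-dependent powers of $q$ cancel and one is left with $2^{2R_N}\prod_i d_i^2>CN^{-2L}$ for the positive constant $C=q^{2(1-L-g-Lg)}\beta^{-L}$; equivalently, $\prod_i d_i^2>2^{-2R_N-f(N)}$ with $f(N)=2L\log_2N-\log_2C$. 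Finally, Lemma~\ref{snr} also gives $\log SNR(C_N,\mu_N)\ge\log\alpha+2N\log q-2\log N$, which grows linearly in $N$, whereas $f(N)=\Theta(\log N)$; hence $f(N)=o(\log SNR(C_N,\mu_N))$, as required.

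I expect the one genuinely substantive step to be the middle one: reading the full-rank condition in the definition of a UDMG correctly so as to extract $\sum_i m_i\le N+g-1$ from the single nonzero vector $u$. This is exactly where the genus enters, loosening the genus-$0$ estimate $\sum_i m_i\le N-1$ to $\sum_i m_i\le N+g-1$ and hence costing only a bounded (in $N$) factor $q^{2g}$ in the product of gaps. Everything else is bookkeeping: assembling Lemmas~\ref{gap} and~\ref{snr}, and observing that the residual $\Theta(\log N)$ loss is automatically absorbed by the $o(\log SNR_N)$ slack since $\log SNR_N$ is of order $N\gg\log N$.
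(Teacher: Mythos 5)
Your proof is correct and follows essentially the same path as the paper's: reduce to Lemma~\ref{complex}, apply Lemma~\ref{gap} to each subchannel, invoke the UDMG property to bound $\sum_i m_i < N+g$, and absorb the resulting $\Theta(\log N)$ loss using the $\Theta(N)$ growth of $\log SNR_N$ from Lemma~\ref{snr}. Your write-up is somewhat more explicit than the paper's at the one nontrivial step (extracting $\sum_i m_i \le N+g-1$ from the nonvanishing of $u=v-w$ via the rank condition in the UDMG definition), which the paper states without justification.
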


\begin{proof}
Our proof is modeled on that in Appendix IV of \cite{Tal} .
Fix an $N\geq N_0,$ with $L(N_0-g)\geq N_0,$ and a UDMG  $\mathcal M=\{M_1,...,M_L\}$ of genus $g$ and length $L$ consisting of $N\times N$ matrices. Keep notation as above.
Let $v,w\in W_N$ be distinct, so for every $1\leq i\leq L$, $M_iv\neq M_iw$. Suppose that $M_iv$ and $M_iw$ agree in precisely the first $\lambda_i$  entries. Hence by Lemma~\ref{gap},
$$ |\mu_0(M_iv)-\mu_0(M_iw)| \geq q^{N-\lambda_i-1}/N,$$
so if $d_i=  |\mu_0(M_iv)-\mu_0(M_iw)|/\sqrt{SNR(C_N,\mu_N)}$, then by Lemma~\ref{snr}
$$d_i\geq q^{-\lambda_i-1}/N\sqrt{\beta}.$$
Since $\mathcal M$ is a UDMG of genus $g$, and since $v\neq w$, we must have  $\sum_{1\leq i \leq L} \lambda_i < N+g.$ Hence 
\begin{align*}
\prod_{1\leq i\leq L} d_i ^{2} &\geq \prod_{1\leq i\leq L} q^{-2\lambda_i-2}/\beta N^2\\
&= \left(\frac{1}{\beta q^2 N^2}\right)^{L}q^{-2\sum_{i=1}^L\lambda_i}\\ & >  \left(\frac{1}{\beta q^2N^2}\right)^{L}q^{-2N-2g}\\ &=  \frac{1}{\beta^L q^{2L+2g}N^{2L} }q^{-2N}\\ &=
\frac{1}{\beta^Lq^{2L+2g}N^{2L}} \frac{1}{2^{2(R_N+\Delta)}},
\end{align*}
where recall $R_N=\log_2(W_N)$ is the rate of the code, and  $\Delta$ is the codimension of $W_N$  in $\mathbb{F}_{q^N},$ 
which is at most $gL$. Since 
$$\log{\beta^Lq^{2L+2g}N^{2L}2^{2\Delta}}=o(\log(\alpha q^{2N}/N^2)),$$ the Theorem follows from Lemmas~\ref{complex} and~\ref{snr}.
\end{proof}

The reason we include $N_0$ in our formulation is that we will show (see \S 5) that for fixed $N$ and $q,$ there is a bound for the number of parallel channels $L$ a message can be reliably sent over in terms of the genus $g$ of a UDMG.  As a result, allowing UDMGs (and not just only UDMs) offer new possibilities for coding design on slow-fading parallel channels, allowing for a larger value of $L$ for fixed $q$ and $N$.

\subsection*{Outline of the Paper}

\indent  
In \S1 we give the abstract mathematical model of UDMG and derive their basic properties, including 
the vector-space realization of UDMG, equivalence of UDMG, and introduce sub- and quotient-UDMG. In \S2 we relate UDMG to linear vector codes, suggesting that the former is something of a generalization of the latter.
Section 3 gives our construction of UDMG of genus $g$ based on curves of genus $g$ (which we call ``Goppa UDMG"). 
 In \cite{VG} (Proposition 14) they construct a UDM so that the matrix formed by concatenating the first row of each matrix in the UDM is the generator matrix for a Reed-Solomon code. Similarly in Theorem \ref{gen} we show that the matrix formed by concatenating the first column of each of the matrices in a Goppa UDMG  is the generating matrix for 
 a corresponding Goppa code.
In \S4  we provide an example of a Goppa UDMG worked out for a curve of genus $1$. 
The final  \S5 gives upper and constructable lower bounds on the number of matrixes in a UDMG in terms of its parameters.

\section{Mathematical Model of UDMG}

\indent We first present the most general definition of Universally Decodable Matrices of genus $g$, and then specialize to the class of most interest in communications applications.

To fix notation, for a prime power $q$, let $\MB F_q$ be the field with $q$ elements, and for any $N,K>0$, let $\mathcal{M}_{K \times N}(\mathbb{F}_q)$ denote the $K\times N$
matrices with entries in $\MB F_q$. For any set $S$ of column vectors in $\MB F_q^K$, we let $\Span(S)$ be the span of $S$ over $\MB F_q$. We denote the $i^{th}$ entry of
a vector $v$ by $(v)_i$ and likewise denote the $j^{th}$-column of a matrix 
$M$ by $(M)_j$. All our vector spaces will be finite dimensional. We define an integer vector $\alpha$ to be greater than or equal to another integer vector $\beta$ of the same length, if every entry of $\alpha$ is greater than or equal to the corresponding entry of $\beta$. If $\eta\in \Bbb Z$, we let $\vec{\eta}$ denote the column vector all of
whose entries are $\eta$ and whose length is determined by context.  For any vector $\MBF N=(N_1,...,N_L)$ of integers, we will let $N=N(\MBF N)=\sum_{i=1}^L N_i$.


\begin{defn} \label{allowable}
For any positive integer $L$, let $\MBF N=(N_1,...,N_L)$ be a vector of non-negative integers. Fix $K>0, g \geq 0.$ 
Let $\mathbf{M}=\{M_1,\ldots, M_{L}\}$ be a set of $L$ matrices with $M_i \in \mathcal{M}_{K \times N_i}(\mathbb{F}_q).$ 
For any $0\leq \lambda_i\leq N_i$ such that $\sum_{1=1}^L \lambda_i= K+g$, the collection $\mathfrak A$ of the first $\lambda_i$ columns from
each $M_i$ is called an \textit{allowable set of columns}  from $\MBF M$.

We say that $\MBF M$ is a (set of) Universally Decodable Matrices of genus $g$ (UDMG) if
 for every allowable set of columns $\mathfrak A$ of $\MBF M$, $\Span(\mathfrak A)=\MB F_q^K.$
If so, we say that $\MBF M$ is a $(L,\MBF N,K,q,g)$-UDMG. The space of all UDMG with {\it parameters} $(L, \MBF N,K,q,g)$ will be denoted 
as $\mathcal{U}(L,\MBF N,K,q,g).$ We call the parameters $(L, \MBF N, K, q,g)$  respectively the {\it size}, {\it length},  {\it height},
{\it alphabet cardinality}, and {\it genus} of $\MBF M$.

If in addition there is a positive integer $\eta$ such that $N_i=\eta$, $1\leq i\leq L,$  $\MBF M$ will be called $\eta$-\textit{regular}, and the
set of such will be denoted by  $\MC{U}(L,\vec{\eta},K,q,g)$. 

\end{defn}

\begin{remark} It is only interesting to study UDMG $\MBF M$ which have at least one set of allowable columns, i.e.,
those for which $N\geq K+g$. Similarly, if any $N_i>K+g$, $(M_i)_j$ for $K+g<j\leq N_i$ is never an element of an allowable set of columns,
so we will only be interested in considering UDMG for which every $N_i\leq K+g$. Anomalous behavior occurs
when $K=1$, since then for any $g\geq 0$ and any $\MBF N$, we can have a code of unbounded size by taking each $M_i=\vec{1}$ (of length $N_i$).
\end{remark}

These considerations lead to the following:

\begin{defn} A UDMG $\MBF M\in \mathcal{U}(L,\MBF N,K,q,g)$ we be called {\it non-degenerate} if $N\geq K+g$, $N_i\leq K+g$ for
each $1\leq i\leq L$, and $K\geq 2$. The set of such will be denoted $\mathcal{U}_n(L,\MBF N,K,q,g).$ A UDMG which is not nondegenerate
will be called {\it degenerate}.
\end{defn}
\begin{remark}\label{contain}\label{allow}
(1) We will be concerned throughout with the problems of finding {\it upper bounds} for $L$, by which we mean  $B_u=B_u(\MBF N, K, q,g)$ such that
$\mathcal{U}_n(L,\MBF N,K,q,g)$ is empty for $L>B_u$, and {\it constructable lower bounds} for $L$, by which we mean $B_\ell=B_\ell(\MBF N, K, q,g)$ such that
there exists an $L\geq B_\ell$ such that $\mathcal{U}_n(L,\MBF N,K,q,g)$ is non-empty.


(2) In the definition of UDMG we do not require $g$ to be minimal, so for any $0\leq g\leq \tilde{g}$, 
$\mathcal{U}(L,\MBF N,K,q,g)\subseteq \mathcal{U}(L,\MBF N,K,q,\tilde{g}).$ However a non-degenerate UDMG with parameters $(L,\MBF N,K,q,g)$
is not necessarily a non-degenerate UDMG with parameters  $(L,\MBF N,K,q,\tilde{g})$

(3) Similarly, given any $\MC A=(A_1,...,A_L) \in\MC{U}(L,\MBF N,K,q,g)$, we can \textit{truncate} it to produce an  $\MC{A}'\in\MC{U}(L',\MBF N',K,q,g)$ for 
$\MBF N\geq \MBF N'\geq \vec{0}, $by taking $A'_i$ to be the first $N'_i$ columns of $A_i$ for all $1\leq i\leq L$. Here $L'$ is the number of non-zero  
$N'_i$ in $\MBF N'$. We call such an $\MC A'$ a {\it subUDMG} of $\MC A$.
(Taking $\MBF N'=\vec{0}$ produces what could only be called the {\it empty} UDMG.)
If each $N'_i<N_i$, we say that $\MC A'$ is a proper {\it subUDMG} of $\MC A$.




\end{remark}

Dual to the notion of subUDMG is taking
a quotient of a UDMG by a proper subUDMG. To explain this construction, it will be necessary to view UDMGs through a different guise.
Indeed, note that the definition of a UDMG considers the span of allowable columns of a set of matrices, and not the columns themselves. 
Hence it is sometimes useful to consider just the spans of the columns of a matrix in a UDMG, and not the columns themselves.  We build up the requisite notions as follows.

\begin{defn} Take $K,N>0,$ and $M \in \mathcal{M}_{K \times N}(\mathbb{F}_q).$
For any $1\leq j \leq N$, let $V(M)_j$ denote the span over $\MB F_q$ of the first $j$ columns of $M$.
We set $V(M)=\{V(M)_1,...,V(M)_N\}$ and call it the \textit{vector space realization} of $M$.

For any positive integer $N$, let $\MBF N=(N_1,...,N_L)$ be a vector of positive integers. 
If $\mathbf{M}=\{M_1,\ldots, M_{L}\}$ is a set of $L$ matrices with $M_i \in \mathcal{M}_{K \times N_i}(\mathbb{F}_q)$, 
we set $V(\mathbf M)=\{V(M_1),....,V(M_L)\}$ and call it the \textit{vector space realization} of $\mathbf M$.

Note that all $V(M_i)_j$ are subspaces of $\MB F_q^K$. 
\end{defn}

\begin{defn}
If $W$ is an $\MB F_q$-vector space, and $C: V_1,...,V_N$ is an ordered list of $N$ subspaces,  we call $C$ a \textit{chain} of subspaces
of $W$ if $V_1\subseteq \cdots \subseteq V_N$. We say the chain is \textit{closely nested} if $\dim(V_1)\leq 1$ and $\dim(V_{i+1}/V_i)\leq 1$ for each $1\leq i<N$.
\end{defn}
For any  $M\in \mathcal{M}_{K \times N}(\mathbb{F}_q),$ $V(M)$ is a chain of closely nested subspaces of $\MB F_q^K$. 
Conversely, given a chain $C: V_1\subseteq \cdots \subseteq V_N$ of closely nested subspaces of $\MB F_q^K$, one can form a matrix 
$M\in \mathcal{M}_{K \times N}(\mathbb{F}_q),$ such that $C=V(M)$ by setting $V_0=0$, and for each $0\leq i<N$ choosing $(M)_{i+1}\in \MB F_q^K$ to be a generator
of $V_{i+1}/V_i$ if the quotient is 1-dimensional, and arbitrarily in $V_{i+1}$ if $V_{i+1}=V_i$.

\begin{defn} We define a closely nested chain $C: V_1\subseteq \cdots \subseteq V_N$ of subspaces of an $\MB F_q$-vector space $W$ to be \textit{isomorphic} to a closely nested chain $C': V'_1\subseteq \cdots \subseteq V'_N$ of subspaces of an $\MB F_q$-vector space $W'$, if there is an $\MB F_q$-vector space isomorphism $\phi: W\rightarrow W'$ such that
$\phi(V_i)=V'_i$ for all $1\leq i\leq N$. 
\end{defn}

\begin{remark}
We extend this notion of isomorphism (element-by-element) to isomorphisms of ordered collections of closely nested chains of a vector space.
\end{remark}

With this we can now define two UDMGs to 
be \textit{isomorphic} if their vector space realizations are isomorphic ordered collections of closely nested chains of some $\MB F_q^K$.


Given a set of matrices, one can test whether it is a UDMG in terms of its vector space realization.

\begin{defn} For any positive integer $L$, let $\MBF N=(N_1,...,N_L)$ be a vector of positive integers. Fix $K>0,$ and let $W$ be a vector space
over $\MB F_q$ of dimension $K$. For each $1\leq i\leq L$, let
$C_i: V^i_1\subseteq \cdots \subseteq V^i_{N_i}$ be a closely nested chain of subspaces of $W$, and $\MBF C=\{C_1,....,C_L\}$ be the ordered collection of these chains.
A vector $\Lambda=(\lambda_1,...,\lambda_L)$ of integers with $1\leq \lambda_i\leq N_i,$ such that $\sum_{i=1}^L \lambda_i\geq K+g$ is
called an \textit{allowable vector} for $\MBF C$.
We say that $\MBF C$ is a (set of) Universally Decodable Vector Spaces of genus $g$ (UDVSG) if
 for every allowable vector $\Lambda=(\lambda_1,...,\lambda_L)$ of $\MBF C$, the vector space sum
$\sum_{i=1}^L V^i_{\lambda_i}=W.$
If so, we say that $\MBF C$ is a $(L,\MBF N,K,q,g)$-UDVSG attached to $W$. 
\end{defn}

We have concocted these definitions so that the vector space realization of an  $(L,\MBF N,K,q,g)$-UDMG is a  $(L,\MBF N,K,q,g)$-UDVSG, and conversely,
that any  $(L,\MBF N,K,q,g)$-UDVSG attached to some $W$ is isomorphic to the vector space realization of some  $(L,\MBF N,K,q,g)$-UDMG.
Therefore the notion of UDVSM gives a coordinate-free way to study UDGMs. 
This is precisely what we need to make sense of quotients of a UDMG.

\begin{defn} We define a closely nested chain  $C: V_1\subseteq \cdots \subseteq V_N$ of subspaces of an $\MB F_q$-vector space $W$
to be \textit{reduced} if $V_1$ is non-trivial and {\it irredundant}
if it is reduced and $V_{i+1}\neq V_i$ for all $1\leq i<N$.  
We call a collection of closely nested chains of subspaces of $W$ to be be \textit{reduced} or {\it irredundant} if every chain is reduced or irredundant.
Likewise we call a collection of matrices to be \textit{reduced} or {\it irredundant} if its vector space realization is. 
\end{defn}

\begin{remark} (1) Any closely nested sequence of subspaces can be \textit{pruned} by removing any initial $0$-subspaces 
to make it reduced, and then be further pruned by removing any repeated subspaces to make it irredundant.
We can correspondingly \textit{prune} a matrix by removing any initial 0-columns or by removing any column in the span of the previous columns.

(2)  It is the regular, irredundant UDMG which are most important in the engineering application described in the Introduction
(where we considered only square UDMG for ease of exposition). The reason for irredundancy 
is that one would not waste power by transmitting a zero codeword or one known to be in the span of previous ones since we are assuming the channel 
transmits reliably what it does not erase. The reason for regularity is that each channel will be used for the same amount of time.

 We note that
the $\eta$-regular, irredundant UDMGs of size $L$, genus $0$, and height $K$ over $\bF_q$ comprise precisely the set $\mathcal{U}(L,\vec{\eta},K,q,0),$ 
which coincides with the space of all $(L,\eta,K,q)$-UDMs defined in \cite{VG}.

(3) In complete analogy to truncating a UDMG to form a subUDMG (or a proper subUDMG), one can truncate a UDVSG by truncating its chains
to form a subUDVSG (or proper UMVSG if every chain is truncated.). 

(4) Likewise we can define a UDVSG to be non-degenerate if it is the vector space realization of a non-degenerate UDMG.
\end{remark}

The following will be a fundamental notion for us. 
\begin{defn} Let  $C: V_1\subseteq V_2\subseteq \cdots \subseteq V_N$ be a chain of subspaces of an $\MB F_q$-vector space $W$. Let $B$ be any subspace of $W$.
We define the quotient chain $C_B$ of $C$ modulo $B$ to be the chain
$$(V_1+B)/B\subseteq \cdots \subseteq (V_N+B)/B,$$
of subspaces of $W/B$.
\end{defn}

\begin{prop}\label{nest} Let $W$ be an $\MB F_q$-vectors space and $B$ a subspace of $W$.
 If $C$ is a closely nested chain of subspaces of $W$, then $C_B$ is a closely nested chain of subspaces of $W/B$.
\end{prop}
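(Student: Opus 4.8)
The plan is to verify the two defining conditions of a closely nested chain for $C_B$ directly, using the standard isomorphism theorems for $\MB F_q$-vector spaces. That $C_B$ is a chain at all is immediate: from $V_i\subseteq V_{i+1}$ we get $V_i+B\subseteq V_{i+1}+B$, hence $(V_i+B)/B\subseteq (V_{i+1}+B)/B$ as subspaces of $W/B$.

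Next, for the first term of the chain I would use the second isomorphism theorem to write $(V_1+B)/B\cong V_1/(V_1\cap B)$, which is a quotient of $V_1$ and hence has dimension at most $\dim(V_1)\leq 1$, the last inequality because $C$ is closely nested. For the successive quotients, the key identity is the third isomorphism theorem,
$$\frac{(V_{i+1}+B)/B}{(V_i+B)/B}\;\cong\;\frac{V_{i+1}+B}{V_i+B},$$
so it suffices to bound $\dim\big((V_{i+1}+B)/(V_i+B)\big)$. For this I exhibit a surjection $V_{i+1}/V_i\twoheadrightarrow (V_{i+1}+B)/(V_i+B)$: compose the inclusion $V_{i+1}\hookrightarrow V_{i+1}+B$ with the quotient map onto $(V_{i+1}+B)/(V_i+B)$. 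This composite is surjective, since any coset $v+b+(V_i+B)$ with $v\in V_{i+1}$ and $b\in B$ equals $v+(V_i+B)$ because $b\in V_i+B$; and its kernel $V_{i+1}\cap(V_i+B)$ contains $V_i$, so it factors through $V_{i+1}/V_i$. Therefore $\dim\big((V_{i+1}+B)/(V_i+B)\big)\leq \dim(V_{i+1}/V_i)\leq 1$, which is exactly the closely nested condition for $C_B$.

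There is essentially no deep obstacle here; the only point that needs a little care is the bookkeeping in identifying the subquotient $\frac{(V_{i+1}+B)/B}{(V_i+B)/B}$ with $(V_{i+1}+B)/(V_i+B)$ and in checking surjectivity of the map out of $V_{i+1}/V_i$ — that is, making sure the extra summand $B$ does not enlarge the successive quotients. Everything else is formal manipulation with the isomorphism theorems.
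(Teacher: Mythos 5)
Your proof is correct, but the route is genuinely different from the paper's. You first apply the third isomorphism theorem to identify the successive quotient
\[
\frac{(V_{i+1}+B)/B}{(V_i+B)/B}\cong\frac{V_{i+1}+B}{V_i+B},
\]
and then bound its dimension by exhibiting an explicit surjection out of $V_{i+1}/V_i$. The paper instead invokes the second isomorphism theorem to rewrite the quotient as $\bigl(V_{i+1}/(V_{i+1}\cap B)\bigr)\big/\bigl(V_i/(V_i\cap B)\bigr)$ and concludes by a dimension count: $\dim\bigl(V_{i+1}/V_i\bigr)-\dim\bigl((V_{i+1}\cap B)/(V_i\cap B)\bigr)\leq 1$. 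Your surjection argument is arguably a touch cleaner: it sidesteps the slightly delicate identification of $V_i/(V_i\cap B)$ as a subspace of $V_{i+1}/(V_{i+1}\cap B)$ (which the paper uses tacitly), it works verbatim over any ring since it never subtracts dimensions, and it handles the base case $\dim(V_1)\leq 1$ and the inductive step uniformly by taking $V_0=0$. The paper's version has the minor advantage that the dimension formula records exactly how much the quotient shrinks, namely by $\dim\bigl((V_{i+1}\cap B)/(V_i\cap B)\bigr)$, which is mildly informative but not needed here. Either proof is complete and correct.
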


\begin{proof} We just have to check that given two vector spaces $W_1\subseteq W_2$ with $\dim(W_2/W_1)=1$, then the dimension of
$V=((W_2+B)/B)/((W_1+B)/B)$ is at most 1. But $V$ is isomorphic to $(W_2/(W_2\cap B))/(W_1/(W_1\cap B))$ which has dimension
$\dim(W_2/W_1)-\dim((W_2\cap B)/(W_1\cap B))\leq 1$. 
\end{proof}

\begin{remark} If $C$ is a reduced  (or irredundant) chain, then in general, $C_B$ will not be reduced (or irredundant),
but one can of course prune $C_B$ to produce a reduced (or irredundant) chain.
\end{remark}

\begin{thm}\label{quotientUDMG}
Let $\MBF C=\{C_1,...,C_L\}$ be a $(L,\MBF N,K,q,g)$-UDVSG  attached to an $\MB F_q$-vector space $W$ of dimension $K$.
Write the chain $C_i$ as $V^i_1\subseteq \cdots \subseteq V^i_{N_i}$. Let $\MBF S$ be a proper subUDVSG of $\MBF C$ of length $\MBF N'<\MBF N$.
Let $B$ be the vector space sum  
of all subspaces in the chains of $\MBF S$, which is $\sum_{i=1}^L V^i_{N'_i}$
(where we set $V^i_0=\{0\}$). 
Let $r=\max{(K-\sum_{i=1}^L  \MBF N'_i,0)}$.
Then $\dim(B)=(K-r)-d$ for some $0\leq d\leq \min(K-r,g)$. Furthermore, let ${\MBF C}/\MBF S$ be $\{(C_1)_B,...,(C_L)_B\}$ with the first $\mu_i$ subspaces of each $(C_i)_B$ pruned, for each
$1\leq i\leq L$. Then ${\MBF C}/\MBF S$
is an  $(L,\MBF N-\MBF N', d+r,q, g-d)$-UDVSG,
which we call the {\it quotient} of $C$ by $S$.
\end{thm}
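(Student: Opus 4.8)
The plan is to prove the dimension count for $B$ first, then verify the UDVSG property of the quotient. For the dimension of $B$: note that $\dim(B)\le\min(K,\sum_i N'_i)$ trivially, so $\dim(B)\le K-r$ by the definition of $r$. For the lower bound on $\dim(B)$ (equivalently, the upper bound $d\le\min(K-r,g)$ on the defect), I would argue as follows. Pick any way of extending $\MBF N'$ to a vector $\MBF N''$ with $\MBF N'\le\MBF N''\le\MBF N$ such that $\sum_i N''_i=\max(K+g,\,\text{something achievable})$ — more carefully, since $\MBF C$ is a UDVSG of genus $g$, for any allowable vector $\Lambda\ge\MBF N'$ (componentwise, with $\lambda_i\le N_i$ and $\sum\lambda_i\ge K+g$, which exists because $N\ge K+g$ after the non-degeneracy reductions, or one handles the degenerate case directly) we get $\sum_i V^i_{\lambda_i}=W$. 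Since $B=\sum_i V^i_{N'_i}\subseteq\sum_i V^i_{\lambda_i}=W$, and $\sum_i V^i_{\lambda_i}$ is obtained from $B$ by adjoining at most $\sum_i(\lambda_i-N'_i)$ further one-dimensional increments (closely nested!), we get $K=\dim W\le\dim(B)+\sum_i(\lambda_i-N'_i)$. Choosing $\Lambda$ to minimize $\sum_i(\lambda_i-N'_i)$ subject to $\sum_i\lambda_i\ge K+g$ and $\lambda_i\le N_i$: the minimum of $\sum\lambda_i$ is $\max(K+g,\sum N'_i)$ when the constraints $\lambda_i\le N_i$ are slack enough (which holds in the non-degenerate setting since $\sum N_i=N\ge K+g$). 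Thus $\dim(B)\ge K-\max(g,\,K-\sum N'_i)\ge K-r-g$ when $r=K-\sum N'_i>0$, giving $d\le g$; and $\dim(B)\ge K-\max(g,0)$ is subsumed, while $\dim(B)\le K-r$ gives $d\ge 0$ and $d\le K-r$. Combining yields $\dim(B)=(K-r)-d$ with $0\le d\le\min(K-r,g)$.

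Next I would identify the ambient space and height of the quotient. By Proposition~\ref{nest}, each $(C_i)_B$ is a closely nested chain of subspaces of $W/B$, and $\dim(W/B)=K-\dim(B)=r+d$, which matches the claimed height. Pruning each $(C_i)_B$ (removing the first $\mu_i$ terms, which are exactly the subspaces $(V^i_j+B)/B$ that vanish or repeat the previous one) produces chains, and I must check that after pruning, chain $i$ still has length $N_i-N'_i$; this is where I'd record that for $j\le N'_i$ we have $V^i_j\subseteq B$ so $(V^i_j+B)/B=0$, hence these first $N'_i$ subspaces collapse and the surviving relevant part of the chain has exactly $N_i-N'_i$ entries (the indices $N'_i<j\le N_i$), possibly with some repetitions among them that further pruning removes — but the \emph{length} parameter $\MBF N-\MBF N'$ refers to the number of entries before final pruning, consistently with the paper's convention for pruning UDVSG. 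I should state this carefully to match how "length" is being tracked; the cleanest phrasing is that the unpruned tail $(V^i_{N'_i+1}+B)/B\subseteq\cdots\subseteq(V^i_{N_i}+B)/B$ has length $N_i-N'_i$, and $\mu_i$ counts exactly the $N'_i$ leading zero terms (so $\MBF C/\MBF S$ as defined has chain lengths $\MBF N-\MBF N'$).

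The heart is the UDVSG property. Let $\Lambda=(\lambda_1,\dots,\lambda_L)$ be an allowable vector for $\MBF C/\MBF S$ of genus $g-d$: so $1\le\lambda_i\le N_i-N'_i$ and $\sum_i\lambda_i\ge(r+d)+(g-d)=r+g$. I need $\sum_i(V^i_{N'_i+\lambda_i}+B)/B=W/B$, equivalently $B+\sum_i V^i_{N'_i+\lambda_i}=W$. Set $\mu_i=N'_i+\lambda_i$; then $\mu_i\le N_i$ and $\sum_i\mu_i=\sum_i N'_i+\sum_i\lambda_i\ge\sum_i N'_i+r+g$. Now $\sum_i N'_i+r\ge K$ by definition of $r$ (it equals $K$ if $r>0$, and $\ge K$ if $r=0$), so $\sum_i\mu_i\ge K+g$, i.e. $(\mu_1,\dots,\mu_L)$ is an allowable vector for $\MBF C$ itself (modulo checking $\mu_i\ge1$, which holds since $\lambda_i\ge1$, and $\mu_i\le N_i$). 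Since $\MBF C$ is a UDVSG of genus $g$, $\sum_i V^i_{\mu_i}=W$. As $B\subseteq W$, a fortiori $B+\sum_i V^i_{\mu_i}=W$, which is exactly what we needed. The one subtlety to watch: when $r>0$, $\sum N'_i<K$ and the inequality $\sum_i\mu_i\ge K+g$ uses $r$ precisely to recover the $K$; when $r=0$, $\sum N'_i\ge K$ and there is slack. I expect the main obstacle to be the bookkeeping in the first part — pinning down $\dim(B)$ and showing the defect $d$ is at most $g$ requires choosing the allowable vector $\Lambda\ge\MBF N'$ optimally and invoking non-degeneracy (so that $N\ge K+g$ guarantees such a $\Lambda$ exists with the $\lambda_i\le N_i$ constraints non-binding), together with the closely-nested property to bound how fast dimension can grow. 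Once $\dim(B)=(K-r)-d$ is in hand, the UDVSG verification is the clean index-arithmetic above.
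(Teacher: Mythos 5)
Your proof is correct and follows essentially the same route as the paper's: bound $\dim B$ above via the closely-nested property, bound it below by comparing $B$ against $\sum_i V^i_{\lambda_i}=W$ for an allowable $\Lambda\geq\MBF N'$ of minimal excess (the paper phrases the $d\leq g$ step as a contradiction, you as a direct optimization --- the same content), and verify the quotient's UDVSG property by the index shift $\mu_i=N'_i+\lambda_i$ together with $\sum_i\mu_i\geq K+g$. The only slip is cosmetic: your intermediate lower bound $K-\max(g,\,K-\sum N'_i)$ should read $K-\max(K+g-\sum_i N'_i,\,0)$ (the excess being minimized is $\sum_i\lambda_i-\sum_i N'_i$), but the line that follows, $\geq K-r-g$, is the correct consequence and the rest of the argument is unaffected.
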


\begin{proof} First let us verify that if $r=\max{(K-\sum_{i=1}^L  \MBF N'_i,0)}$, and
$\dim(B)=(K-r)-d$, then $0\leq d\leq \min(K-r,g)$. First of all $d\geq 0$ if $r=0$ since $\dim(B)\leq \dim(W)=K$.
On the other hand, if $r>0$, by the definition of closely nested, $\dim{B}\leq \sum_{i=1}^L\MBF N'_i=K-r$.
Now we need to show $d\leq g$.
First of all, if $\MBF N'$ is an allowable vector, then $r=0$ and the dimension of $B$ is $K$, so $d=0$. 
Now suppose $\MBF N'$ is not an allowable vector, and for a contradiction, that $d>g$. Then there is an allowable vector $\MBF \lambda\geq \MBF N'$
with $\sum_{i=1}^L=K+g$, and so that
$\sum_{i=1}^L (\lambda_i-\MBF N'_i)\leq (K+g)-(K-r)=g +r$.
Hence $\dim(\sum_{1=1}^L (V^i)_{\lambda_i})\leq \dim(B)-r-d+g +r<K$, a contradiction of the definition of UDVSG.

Proposition~\ref{nest} gives that each $(C_i)_B$, $1\leq i\leq L$ is a closely nested chain of subspaces of $W/B$, which has dimension $d+r$.
 That the size of $\tilde{\MBF C}/\MBF S$ is $L$ follows from that each $\MBF N'_i\leq N_i-1$. So to
verify that  ${\MBF C}/\MBF S$ is a $(L,\MBF N-\MBF N', d+r,q, g-d)$-UDVSG  attached to $W/B$, we need just to take any vector $\lambda=(\lambda_1,...,\lambda_L)$,
with $\sum_{i=1}^L \lambda_i=(d+r)+(g-d)=g+r$, and check that $\sum_{i=1}^L (V^i_{\MBF N'_i+\lambda_i}+B)/B)=W/B$. But this follows since
$\sum_{i=1}^L(\lambda_i+\MBF N'_i)\geq K+g$, so  $\sum_{i=1}^L V^i_{\MBF N'_i+\lambda_i}=W$. 
\end{proof}

\begin{remark} 1) If $S$ is the empty UDVSG, then $\MBF C/\MBF S=\MBF C$. 

2) Even if $\MBF C$ is nondegenerate, it can happen  that $\tilde{\MBF C}/\MBF S$ is not.
\end{remark}

\begin{defn} The quotient of a UDMG $\MBF C$ by a proper subUDMG $\MBF S$ is a UDMG isomorphic to the quotient of $V(\MBF C)$ by $V(\MBF S)$
(so is only defined up to isomorphism).
\end{defn}


\section{Relationship between UDMGs and linear vector codes}


\indent If $C$ is an $[n,k,d]$ $\MB{F}_q$-linear vector code (that is, a $k$-dimensional subspace of $\MB F_q^n$ whose minimal Hamming distance is $d$), 
then the Singleton Bound states that  $n+1-d-k\geq 0$ see \cite{Tsfa} or \cite{JW}. We standardly call $s=n+1-d-k$
the {\it Singleton Defect} of $C$ \cite{codedefect}. If $s=0$ then $C$ is a maximal distance separating \TI{MDS code}. More generally,
linear codes of defect $s$ are called A$^s$MDS codes

\begin{prop}\label{linear} Let $\MBF M=\{M_i\}_{1\leq i\leq L}$ be an $(L,\MBF N,K,q,g)$ UDMG, and $G$ be the $K\times L$ matrix whose $i^{th}$-column
is the first column of $M_i$. Then if $L\geq K+g$, $G$ is the generating matrix for some $\MB F_q$-linear $[L,K,d]$-code $\tilde{ C}$ of defect at most $g$. 
In particular, if  
$g=0$, $\tilde{C}$ is an $MDS$-code.
\end{prop}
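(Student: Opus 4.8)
The plan is to show that the columns of $G$ span $\MB F_q^K$ (so that $G$ indeed generates a $K$-dimensional code), and then to bound the Singleton defect by analyzing which subsets of columns of $G$ can fail to be of full rank. First I would observe that taking $\lambda_1=\cdots=\lambda_L=1$ is an allowable choice of columns for $\MBF M$ precisely when $\sum_i \lambda_i = L \geq K+g$; since we are assuming $L \geq K+g$, the UDMG property says the first columns of the $M_i$ together span $\MB F_q^K$, i.e. $\Span(\{(M_i)_1 : 1\leq i\leq L\}) = \MB F_q^K$. Hence $G$ has rank $K$ and generates an $[L,K,d]$-code $\tilde C$ for some $d$; call it $\tilde C$.

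Next I would bound $d$ from below, equivalently bound the Singleton defect $s = L+1-d-K$ from above by $g$. The minimal distance $d$ of $\tilde C$ equals $L$ minus the maximal size of a set of coordinate positions on which some nonzero codeword vanishes; dually (and more convenient here) $d$ is the smallest number of columns of $G$ that are linearly dependent. So it suffices to show: any $K+g$ columns of $G$ are linearly independent — wait, that would give $d\geq K+g+1$, too strong. The correct statement to extract is that any $K+g-1+1 = K+g$ columns... let me instead argue directly that $d \geq L - (K+g) + 1$, i.e. that any nonzero codeword has weight at least $L-K-g+1$, i.e. vanishes on at most $K+g-1$ coordinates. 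Equivalently: any $L-(K+g-1) = L-K-g+1$ columns of $G$ — no. Cleanest: the Singleton defect $s$ satisfies $s\leq g$ iff $d \geq L-K-g+1$ iff every set of $K+g$ columns of $G$ contains a spanning subset, which would follow if every $K+g$ columns of $G$ span $\MB F_q^K$. That last claim is exactly what the UDMG hypothesis gives: pick any index set $I$ with $|I| = K+g$; set $\lambda_i = 1$ for $i\in I$ and $\lambda_i = 0$ for $i\notin I$; then $\sum_i\lambda_i = K+g$, this is an allowable set of columns, and the UDMG property forces the corresponding columns of $G$ (which are the first columns of the $M_i$, $i\in I$) to span $\MB F_q^K$.

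Now I would convert "every $K+g$ columns span" into "$d\geq L-K-g+1$". If a nonzero codeword $c = xG$ (row vector $x\in\MB F_q^K$) had Hamming weight $w \leq L-K-g$, then $c$ would vanish on a set $J$ of $\geq K+g$ coordinates; restricting to any $K+g$ of these, the corresponding columns of $G$ would all lie in the hyperplane $\{v : x\cdot v = 0\}$, contradicting that they span $\MB F_q^K$ (as $x\neq 0$). Hence $d\geq L-K-g+1$, so $s = L+1-K-d \leq g$, and $\tilde C$ has defect at most $g$. When $g=0$ this reads $s\leq 0$, forcing $s=0$ by the Singleton bound, so $\tilde C$ is MDS.

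The only genuinely delicate point is bookkeeping with the allowable-set definition: Definition~\ref{allowable} requires $0\leq \lambda_i\leq N_i$ and $\sum\lambda_i = K+g$ exactly, so I must check that setting $\lambda_i\in\{0,1\}$ is legitimate — it is, since $N_i\geq 1$ for all $i$ — and that I can indeed choose an index set of size exactly $K+g$, which needs $L\geq K+g$, precisely the stated hypothesis. Everything else is the standard equivalence between "every $m$ columns of a generator matrix span" and "minimum distance $\geq n-m+1$," which I would state in the form above without belaboring it. I expect no real obstacle beyond making sure the inequalities and the allowable-set constraints line up.
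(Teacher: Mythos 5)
Your proposal is correct and follows essentially the same route as the paper: establish that any $K+g$ columns of $G$ span $\MB F_q^K$ (by choosing $\lambda_i\in\{0,1\}$ as an allowable set), then conclude that a nonzero codeword $xG$ can vanish on at most $K+g-1$ coordinates, giving $d\geq L-K-g+1$ and hence Singleton defect at most $g$. The paper's version phrases the last step via "$v$ is a row of $MG$ for some invertible $M$" rather than "$c=xG$ for nonzero $x$," but this is the same argument.
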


\begin{proof} Since $L\geq K+g$ and $\MBF M$ is a UDMG of genus $g$, $G$ has rank $K$ over $\bF_q$. We just have to bound the minimum distance of $\tilde{C}$.
If $v\in \tilde{C}$ is non-zero, there is an invertible $K\times K$ matrix $M$ over $\bF_q$ such that $v$ is a row of $MG$. Since $G$ is a matrix whose every $K\times (K+g)$ minor has rank $K$, the same is true of $MG$. Hence $v$ has at most $K+g-1$ zero entries, so has Hamming weight at least $L+1-K-g$. Hence the
Singleton defect of $\tilde{C}$ is at most $g$. 
\end{proof}

Bounds on the size of MDS codes have been extensively studied (they are the subject of the famed ``MDS"-conjecture), and still comprise an active area of research. 
Although there are sporadic better results, the best known bound for a general $[n,k,d]$ $\MB{F}_q$-linear MDS code is that $n\leq k+q-1$ (see \cite{codedefect} or \cite{Tsfa}). 

We will make use of the generalization of this bound to A${^s}$MDS codes:
\begin{lem}\label{linearbound}\cite{codedefect}
Let $C$ be a $[n,k,d]$ $\MB{F}_q$-linear code of Singleton defect $s$. Then $$n \leq k-2+(q+1)(s+1).$$
\end{lem}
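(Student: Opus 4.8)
The plan is to reduce to a two-dimensional code by repeated shortening and then count points on a projective line. Throughout I assume $k\geq 2$: this is the only case needed here, and some such hypothesis is necessary, as the $[n,1,n]$ repetition codes have defect $0$ with $n$ unbounded.

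Recall the elementary behaviour of shortening. If $D$ is an $[m,\ell,\delta]$ code and $i$ is a coordinate on which $D$ is not identically zero, then $D_{i}=\{c\in D:c_{i}=0\}$, with coordinate $i$ deleted, is an $[m-1,\ell-1,\delta']$ code with $\delta'\geq\delta$; hence its Singleton defect $(m-1)+1-\delta'-(\ell-1)=m+1-\delta'-\ell$ is at most that of $D$, and of course at least $0$ by the Singleton bound. A code of dimension $\geq 1$ has at least one coordinate on which it is not identically zero, so I may shorten $C$ a total of $k-2$ times, always along such a coordinate. This yields a code $C^{*}$ of type $[\,n-k+2,\ 2,\ \delta^{*}\,]$ whose Singleton defect $s^{*}$ satisfies $0\leq s^{*}\leq s$, so that $\delta^{*}=(n-k+2)+1-2-s^{*}=(n-k+2)-1-s^{*}$.

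Let $G^{*}$ be a $2\times(n-k+2)$ generator matrix of $C^{*}$. A zero column, if present, may be removed: this lowers the length and the defect of $C^{*}$ each by $1$, and since $q+1\geq 1$ the bound for the smaller code implies the one sought, so I assume $G^{*}$ has no zero column. Then each column is a nonzero vector of $\MB F_q^{2}$, hence determines one of the $q+1$ points of $\mathrm{PG}(1,q)$. If a given point is carried by $t$ of the columns, choose a nonzero $x\in\MB F_q^{2}$ orthogonal to the corresponding $1$-dimensional subspace; the codeword $xG^{*}$ is then nonzero and vanishes in those $t$ coordinates, so it has weight at most $(n-k+2)-t$. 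Therefore $\delta^{*}\leq(n-k+2)-t$, i.e. $t\leq(n-k+2)-\delta^{*}=s^{*}+1$. Summing these multiplicities over all $q+1$ points gives $n-k+2\leq(q+1)(s^{*}+1)\leq(q+1)(s+1)$, which rearranges to $n\leq k-2+(q+1)(s+1)$. (For $s=0$ this is exactly the classical shortening proof that an MDS code satisfies $n\leq k+q-1$.)

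The one place demanding a little care is the reduction step: one must check that at each stage the dimension has not yet dropped below $2$, so that an active coordinate to shorten along still exists, and that the Singleton defect genuinely does not increase along the chain of shortenings. The boundary cases ($d=1$, $n=k$, or the presence of identically-zero coordinates) are all dispatched by the same monotonicity observations, and the final count on $\mathrm{PG}(1,q)$ is entirely routine.
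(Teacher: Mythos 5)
The paper does not prove this lemma; it cites it from \cite{codedefect}, so there is no internal proof to compare against. Your argument is a correct, self-contained proof by the classical shortening method. The key steps all check out: shortening along a coordinate where the code is not identically zero drops length and dimension by one and weakly decreases the Singleton defect, and a code of dimension at least one always possesses such a coordinate, so the descent to dimension two is legitimate. Your handling of zero columns is also sound --- after deleting a zero column, the Singleton bound applied to the shorter code forces $s^*\geq 1$, and since the length drops by $1$ while the right-hand side drops by $q+1\geq 1$, the bound for the punctured code implies the bound for $C^*$. The final count on $\mathrm{PG}(1,q)$ (at most $s^*+1$ columns over any one projective point, else a nonzero codeword would have weight below $\delta^*$) is the standard step. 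You are also right to flag the implicit hypothesis $k\geq 2$: the paper's statement omits it, and without it the bound is false (e.g.\ the $[n,1,n]$ repetition code has defect $0$ with $n$ arbitrary). This omission is harmless in context, since the paper only applies Lemma~\ref{linearbound} via Proposition~\ref{linear} and Theorem~\ref{1dimdelta} to codes of dimension $K\geq 2$ arising from nondegenerate UDMGs.

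Since I cannot see the proof in \cite{codedefect}, I cannot say whether your route coincides with theirs, but the shortening-to-dimension-two argument you give is the natural generalization of the textbook proof that an MDS code of dimension at least two has length at most $k+q-1$, and it is a perfectly adequate replacement for the citation.
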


Cognizant of Proposition~\ref{linear}, 
in \cite{VG}, for $L\geq K$, they construct a UDMG $\MBF M$ with parameters $(L,\vec{K},K,q,0)$ whose associated linear vector code $\tilde{ C}$ is a Reed-Solomon code with parameters 
$[L,K,L-K+1]$. The Reed-Soloman codes are the classic non-trivial example of MDS codes. In the next section we generalize this construction to more generally build
UDMG $\MBF M$ with parameters $(L,\vec{K},K,q,g),$ whose associated linear vector codes $\tilde{ C}$ are Goppa codes constructed from curves of genus $g$ over
$\bF_q$, and have parameters $[L,K,d]$ for some $d\geq L-K+1-g$, so have Singleton Defect $s\leq g$ (see Remark~\ref{defect},
Theorem~\ref{goppa}, and Theorem ~\ref{gen}).

We cannot help from noting that since every linear code over $\bF_q$ is an A${^s}$MDS codes for some $s$, there is a sense in which UDMG are generalizations
of linear vector codes over $\bF_q$. This leads to the tantalizing question of what the dual of a UDMG should be, and what properties it would have.
Likewise, is there a good notion of what the spectrum of a UDMG should be?

.

\section{Goppa UDMGs}

Everything we require on the theory of curves over finite fields and their associated Goppa codes can be found in \cite{JW}. We will recall what we need by way of
establishing notation.


By a {\it curve} $X$ over $\bF_q$ we mean a one-dimensional non-singular projective variety (always taken to be irreducible)
over the algebraic closure $\bar{\bF}_q$ of $\bF_q$ which is defined over $\bF_q$
We will let  $X(\bar{\bF}_q)$denote the points of $X$ defined over $\bar{\bF}_q$ and $X(\bF_q)$ be the subset of points defined over $\bF_q$. 
A {\it divisor} $D$ on $X$ is an element of the free abelian group generated by $X(\bar{\bF}_q)$, 
so can be written as $D=\displaystyle \sum_{P\in X(\bar{\bF}_q)}n(P)P$, where all but finitely many $n(P)\in \bZ$ vanish. The set of $P$ for which $n(P)\neq 0$
is called the {\it support} of $D$ and written as supp$(D)$. We write $\deg(D)$ for the {\it degree} of $D$,
which is $\displaystyle \sum_{P\in X(\bar{\bF}_q)}n(P)$. We put a partial order on divisors by saying $D\geq 0$ if each $n(P)\geq 0$.
Let $\bar{\bF}_q(X)$ and $\bF_q(X)$
respectively denote the field of functions on $X$ and the subfield of functions defined over $\bF_q$. 
For every $P\in X(\bar{\bF}_q)$ we let $v_P$ be the discrete valuation on $\bar{\bF}_q(X)$
that measures the order of zero (or pole) at $P$ of a function. 
To every non-zero $f\in \bar{\bF}_q(X)$ we can associate a
divisor $(f)=\displaystyle \sum_{P\in X(\bar{\bF}_q)}v_P(f)P$. Likewise, if $\omega$ is a differential on $X$, 
for every $P\in X(\bar{\bF}_q)$, we can let $t_P$ be a uniformizer in the valuation ring of $\bar{\bF}_q(X)$ associated to $v_P$,
and define $v_P(\omega)=v_P(\omega/dt_P)$, which is independent of the choice of $t_P$. 
Then we define the divisor of $\omega$ to be
$(\omega)=\displaystyle \sum_{P\in X(\bar{\bF}_q)}v_P(\omega)P$. We put an equivalence relation on divisors by saying that $D_1$ and $D_2$ are
{\it linearly equivalent} if $D_1-D_2$ is the divisor of a function: if so we write $D_1\sim D_2$. For any differential
$\omega$ we set $\kappa=(\omega)$ which is called a canonical divisor of $X$, which is well-defined up to linear equivalence since the ratio of any two differentials is a function. 
\begin{defn} Let $D$ be a divisor on $X$. Let
\label{LD}
 $$\mathcal{L}(D)=\{ f \in \bar{\bF}_q(X)-\{0\} | (f) \geq \ -D\} \cup \{0\}.$$
\end{defn}

The space  $\mathcal{L}(D)$ is a finite dimensional $\bar{\bF}_q$-vector space,  and we let $\textit{l}(D)$ denote its dimension.
If $D$ is defined over $\bF_q$ (i.e., is fixed by the Galois group of $\bar{\bF}_q$ over $\bF_q$),
then $\mathcal{L}(D)$ has a basis that lies in $\bF_q(X)$.

For any divisor $D$ and point $P\in X(\bar{\bF}_q)$ not in the support of $D$, we can define an {\it increasing zero basis at $P$} for $\mathcal{L}(D)$ to be an ordered basis
$(f_1,...,f_{\ell(D)})$ such that for all $1\leq i<\ell(D)$, $v_P(f_{i+1})>v_P(f_i)$. (One can also do decreasing pole bases.)
Such bases exist because $v_P$ is a discrete valuation and $\bar{\bF}_q$ is the redisue field of $v_P$. If $D$ and $P$ are defined
over $\bF_q$, the increasing zero basis can be taken to have elements in $\bF_q(X)$, in which case we call it {\it an increasing zero basis at $P$ over $\bF_q$.}

Every non-zero function on $X$ has the same number of poles and zeros, so a function without a pole is a constant, and has a trivial divisor. In other words:
\begin{lem}\label{neg}
If $\deg(D)<0$ then \textit{l}($D)=0$. Likewise, $\mathcal{L}(0)=\bar{\mathbb{F}}_q$ so $\mathit{l}(0)=1.$
\end{lem}

Fundamental to the subject is the Riemann-Roch Theorem.
\begin{thm}
\label{Riemann} (Riemann-Roch) For any curve $X$ there is a non-negative integer $g$ called its {\it genus}, 
such that for any canonical divisor $\kappa$ on $X$, and any divisor $D$,
$$\textit{l}(D)-\textit{l}(\kappa-D)=\mathrm{deg}(D)-g+1.$$
\end{thm}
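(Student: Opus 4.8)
The plan is to regard this not as a statement about the individual curve $X$ but about its function field $K=\bar{\bF}_q(X)$, proved adelically (equivalently, via the first cohomology of $\MC O_X(D)$). So I would first introduce the adele ring $\MB A$ of $K$ — the restricted product $\prod'_P K_P$ of the completions at the places $P\in X(\bar{\bF}_q)$, restricted with respect to the valuation rings — together with, for a divisor $D=\sum n(P)P$, the $\bar{\bF}_q$-subspace $\MB A(D)=\{(x_P): v_P(x_P)\geq -n(P)\text{ for all }P\}$. The field $K$ embeds diagonally, and $\MC L(D)=K\cap \MB A(D)$. The whole theorem then splits into two ingredients: (i) a ``Riemann part'' giving $\ell(D)-i(D)=\deg D+1-g$ for a fixed integer $g\geq 0$ attached to $X$, where $i(D):=\dim_{\bar{\bF}_q}\MB A/(K+\MB A(D))$ is the index of speciality; and (ii) Serre duality, $i(D)=\ell(\kappa-D)$ for a canonical divisor $\kappa$. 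Here I write $\ell(D)$ for $\textit{l}(D)$.

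For (i) I would first record the purely local length count: for divisors $D'\leq D$ one has $\dim_{\bar{\bF}_q}\MB A(D)/\MB A(D')=\deg D-\deg D'$, computed place by place. Next, from the inclusions $\MB A(D')\subseteq \MB A(D)$ and $K+\MB A(D')\subseteq K+\MB A(D)$, an elementary isomorphism-theorem argument (using $\MC L(D)\cap\MB A(D')=\MC L(D')$) yields the exact sequence
$$0\to \MC L(D)/\MC L(D')\to \MB A(D)/\MB A(D')\to (K+\MB A(D))/(K+\MB A(D'))\to 0.$$
Granting finiteness of $i(D)$ (a separate point, provable by the usual trick of picking a non-constant $x\in K$, bounding $\deg(m(x)_\infty)-\ell(m(x)_\infty)$ uniformly in $m$, and propagating via the above, or equivalently by finiteness of coherent cohomology on a projective curve), this sequence gives $(\deg D-\ell(D))-(\deg D'-\ell(D'))=i(D')-i(D)$, so $\deg D-\ell(D)+i(D)$ is independent of $D$ along the partial order; since any two divisors have a common upper bound (the entrywise maximum), it is a single integer, say $g-1$. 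Evaluating at $D=0$ and using Lemma~\ref{neg} ($\ell(0)=1$, and $i(0)\geq 0$) shows $g=i(0)\geq 0$, which both proves (i) and produces the non-negative genus; as a sanity check, for $\deg D<0$ Lemma~\ref{neg} forces the expected value of $i(D)$, and once (ii) is in hand Lemma~\ref{neg} applied to $\kappa-D$ gives $i(D)=0$ for $\deg D$ large, recovering Riemann's theorem.

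For (ii) I would introduce Weil differentials — $\bar{\bF}_q$-linear functionals $\eta:\MB A\to\bar{\bF}_q$ vanishing on $K$ and on some $\MB A(D)$ — each of which has a largest divisor $(\eta)$ with $\eta$ killing $\MB A((\eta))$; duality of the finite-dimensional space $\MB A/(K+\MB A(D))$ identifies $i(D)$ with $\dim\{\eta:(\eta)\geq D\}$. The heart of the matter is that the Weil differentials form a one-dimensional $K$-vector space; granting this, fix a nonzero $\omega_0$, set $\kappa:=(\omega_0)$, write every $\eta=f\omega_0$ with $f\in K$ and $(\eta)=\kappa+(f)$, so that $(\eta)\geq D\iff f\in\MC L(\kappa-D)$. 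Hence $i(D)=\ell(\kappa-D)$, and with (i) this is exactly the asserted identity; one also checks this $\kappa$ coincides with $(\omega)$ for a genuine algebraic differential $\omega$, which holds because $X$ is smooth, tying the formula to the statement as phrased. The main obstacle is precisely this step — the one-dimensionality of the space of Weil differentials over $K$ (equivalently Serre duality $H^1(X,\MC O_X(D))^\vee\cong H^0(X,\Omega_X(-D))$ together with identifying the dualizing sheaf with $\Omega_X$), which is where the global geometry of the curve really enters and is usually settled via residues and the residue theorem $\sum_P\operatorname{res}_P(\eta)=0$ (or cohomologically). Everything else is formal linear algebra plus local length computations; since this material is entirely standard, in the paper itself I would simply cite \cite{JW}, the sketch above merely indicating the route a self-contained treatment would follow.
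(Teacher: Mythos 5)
The paper does not prove this statement: Theorem~\ref{Riemann} is recalled as a classical background result, with the reader referred to the standard reference \cite{JW} (cf.\ the opening of Section~3, ``Everything we require on the theory of curves over finite fields and their associated Goppa codes can be found in \cite{JW}''). So there is no proof in the paper to compare against.

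Your sketch is nevertheless a correct outline of the standard Weil--Chevalley adelic proof. The decomposition into a ``Riemann part'' (constancy of $\deg D-\ell(D)+i(D)$, established via the four-term exact sequence comparing $\MB A(D)/\MB A(D')$ with $\MC L(D)/\MC L(D')$ and $(K+\MB A(D))/(K+\MB A(D'))$, plus finiteness of $i(D)$) and a ``duality part'' ($i(D)=\ell(\kappa-D)$ via the one-dimensionality of Weil differentials as a $K$-vector space) is exactly right, and you have correctly flagged the one genuinely non-formal step, namely that the space of Weil differentials is one-dimensional over $K$ (equivalently Serre duality with the identification of the dualizing sheaf with $\Omega_X$ on a smooth curve). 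One small point worth making explicit if you were to write this out: the normalization $g:=i(0)$ gives $g\geq 0$ immediately, but to see that this $g$ agrees with $\ell(\kappa)$ you need to apply the already-proved identity to $D=0$ after duality, which your sketch implicitly does. Your closing instinct is correct and matches the paper's practice: in a paper whose purpose is to construct UDMGs, one simply cites Riemann--Roch rather than reproving it.
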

Note that setting $D=0$ gives $\mathit{l}(\kappa)=g$. Then setting $D=\kappa$ gives that $\deg \kappa =2g-2.$

\begin{cor}
\label{RRcor}
It now follows from Lemma \ref{neg} that if $\textrm{deg}(D)>2g-2,$ then $\textit{l}(D)=1-g+\textrm{deg}(D)$.
\end{cor}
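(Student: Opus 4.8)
The plan is to deduce the corollary directly from Riemann--Roch (Theorem~\ref{Riemann}) together with Lemma~\ref{neg}, exactly as the phrasing ``it now follows'' suggests. The only input beyond the bare statement of Riemann--Roch is the numerical fact, recorded in the note immediately following Theorem~\ref{Riemann}, that a canonical divisor $\kappa$ has $\deg(\kappa) = 2g-2$.

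First I would fix a canonical divisor $\kappa$ on $X$ and apply Riemann--Roch to the divisor $D$, obtaining
$$\textit{l}(D) - \textit{l}(\kappa - D) = \deg(D) - g + 1.$$
Next I would estimate the degree of $\kappa - D$: since $\deg(\kappa) = 2g-2$ and $\deg$ is additive on divisors, $\deg(\kappa - D) = 2g - 2 - \deg(D)$. Under the hypothesis $\deg(D) > 2g-2$ this gives $\deg(\kappa - D) < 0$. By Lemma~\ref{neg}, $\textit{l}(\kappa - D) = 0$. Substituting this back into the displayed equation yields $\textit{l}(D) = \deg(D) - g + 1 = 1 - g + \deg(D)$, which is the claim.

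There is essentially no obstacle here: the argument is a two-line chain of substitutions, and the only thing to be careful about is invoking the correct auxiliary facts in order — first that $\deg(\kappa) = 2g-2$ (so that the degree bound on $\kappa - D$ is strict), then Lemma~\ref{neg} to kill the term $\textit{l}(\kappa - D)$. One might remark that the result fails to be this clean when $\deg(D) \leq 2g-2$, since then $\textit{l}(\kappa - D)$ need not vanish, which is precisely why the hypothesis is stated as it is.
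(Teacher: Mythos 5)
Your argument is correct and is precisely the one the paper intends: apply Riemann--Roch to $D$, note $\deg(\kappa - D) = 2g-2 - \deg(D) < 0$, and invoke Lemma~\ref{neg} to conclude $\textit{l}(\kappa - D) = 0$. Nothing to add.
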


\begin{defn} \cite{JW} Let $X$ be a curve over $\bF_q$ and $\mathbf{P}=\{P_1,\ldots , P_n\}\subseteq X(\bF_q)$. Let $D$ be a divisor on $X$ 
over $\bF_q$,
with $\text{supp}(D) \cap \mathbf{P} =  \varnothing$ . 
Then $\mathbf{C}(X,\mathbf{P},D)=\{(f(P_1),\ldots , f(P_n)) \vert f\in \mathcal{L}(D) \}$ is the \textit{Goppa code} 
associated with $(X,\mathbf{P},D)$ and has parameters $[n,\mathit{l}(D)-\mathit{l}(D-\MBF P),d],$ for some $d\geq n-\deg(D),$ 
as an $\MB F_q$-linear vector code.
\end{defn}

\begin{remark}\label{defect}  Note in particular that if $n>\textrm{deg}(D)$, then the parameters simplify to 
$[n,\mathit{l}(D),d],$ so by the Riemann-Roch Theorem, the Singleton defect of $\mathbf{C}(X,\mathbf{P},D)$ is
less than or equal to $g$.
\end{remark}




We now have what we need to construct our Goppa UDMGs. Our construction directly generalizes the one given in \cite{VG}, once their results on bivariate polynomials
are reinterpreted in terms of statements about the arithmetic and geometry of the projective line $\mathbb{P}^1$ over $\bF_q$.
Our construction produces an $\mathcal{A}\in\mathcal{U}(L,\vec{K},K,q,g)$ but we can always truncate this to an $\tilde{\mathcal{A}}\in\MC{U}(L,\MBF N,K,q,g)$ as in Remark \ref{contain} where each $N_i\leq K$.

\begin{thm}\label{goppa}  Let $X$ be a curve of genus $g$ over $\bF_q$, and fix any $K>g-1.$ Let $a=K+g-1$, $D$ be a divisor of degree $a$ on $X$ defined over $\MB F_q$, and $\MBF{P}=\{P_1,...,P_L\}\subseteq X(\bF_q)$ be such that $\Supp D\cap \MBF{P}=\varnothing$.
Let $B_0$ be any ordered basis for $\mathcal{L}(D)$ as an $\bF_q$-vector space, and for $1\leq i\leq L,$ let $B_i$ be an increasing zero basis for $\MC L(D)$ at $P_i$
 over $\bF_q$. For $1\leq i\leq L,$  let $M_i$ be the change-of-basis matrix from $B_0$ to $B_i$, which is a $K\times K$ matrix. Then
the set $\MBF M=\{M_1,...,M_L\}$ is a UDMG with parameters $(L,\vec{K},K,q,g)$.
We will call $\MBF M$ the Goppa UDMG associated with $(X,\mathbf{P},D),$ and it is nondegenerate if $L,K\geq 2$.
\end{thm}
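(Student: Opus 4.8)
The plan is to unwind the definition of UDMG directly in terms of the change-of-basis matrices and evaluation of functions. Fix an allowable set of columns: choose $0\le\lambda_i\le K$ with $\sum_{i=1}^L\lambda_i=K+g$, and for each $i$ take the first $\lambda_i$ columns of $M_i$. Since $M_i$ is the change-of-basis matrix from $B_0$ to $B_i$, its $j$-th column expresses the $j$-th element $f_{i,j}$ of $B_i$ in the basis $B_0$ of $\MC L(D)$. So the span (over $\bF_q$) of the first $\lambda_i$ columns of $M_i$, read back inside $\MC L(D)\cong\bF_q^K$, is exactly the subspace $W^i_{\lambda_i}:=\Span_{\bF_q}(f_{i,1},\dots,f_{i,\lambda_i})$ spanned by the first $\lambda_i$ members of the increasing-zero basis at $P_i$. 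Thus the allowable set of columns spans $\bF_q^K$ if and only if $\sum_{i=1}^L W^i_{\lambda_i}=\MC L(D)$, and it suffices to prove this last identity whenever $\sum\lambda_i=K+g$.

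The key step is to identify $W^i_{\lambda_i}$ intrinsically: since $B_i$ is an \emph{increasing} zero basis at $P_i$, the valuations $v_{P_i}(f_{i,1})<v_{P_i}(f_{i,2})<\cdots$ are strictly increasing, so I claim $W^i_{\lambda_i}=\MC L(D-\lambda_iP_i)$. Indeed $f\in\MC L(D-\lambda_iP_i)$ iff $f\in\MC L(D)$ and $v_{P_i}(f)\ge\lambda_i$; the strict increase of valuations means the elements of $B_i$ with $v_{P_i}\ge\lambda_i$ are precisely $f_{i,\lambda_i+1},\dots,f_{i,K}$, which (being part of a basis, with valuations forced to be $0,1,\dots,K-1$ once one checks $\ell(D-\lambda P_i)=\ell(D)-\lambda$ for $0\le\lambda\le K$ via Corollary~\ref{RRcor}, using $\deg D=K+g-1>2g-2$) span a complement to $W^i_{\lambda_i}$ inside $\MC L(D)$; hence $W^i_{\lambda_i}$ is exactly the span of the $f_{i,j}$ with $v_{P_i}(f_{i,j})<\lambda_i$, i.e. the functions in $\MC L(D)$ whose valuation at $P_i$ is $<\lambda_i$, modulo the higher ones — careful bookkeeping gives $W^i_{\lambda_i}=\MC L(D-\lambda_iP_i)$ as the span of a zero basis truncated the other way, or more cleanly $\dim W^i_{\lambda_i}=\ell(D)-\ell(D-\lambda_iP_i)=\lambda_i$ and $W^i_{\lambda_i}$ is a graded piece. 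I will phrase this so that the conclusion needed is just $\sum_i\MC L(D-\lambda_iP_i)^\perp$-type counting, or directly: the quotient map $\MC L(D)\to\bigoplus_i \MC L(D)/W^i_{\lambda_i}$ has image whose dimension we must show is all of the target on the relevant indices.

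The cleanest route for the spanning statement is a dimension count via Riemann–Roch on the divisor $D-\sum_{i=1}^L\lambda_iP_i$. The intersection $\bigcap_i (\text{functions in }\MC L(D)\text{ vanishing to order}\ge\lambda_i\text{ at }P_i)$ is $\MC L\!\big(D-\sum_i\lambda_iP_i\big)$, which has degree $a-\sum\lambda_i=(K+g-1)-(K+g)=-1<0$, so by Lemma~\ref{neg} it is $\{0\}$. Dualizing: the natural map $\MC L(D)\to\bigoplus_{i=1}^L \MC L(D)/\MC L(D-\lambda_iP_i)$ is injective; since the target has dimension $\sum_i\lambda_i=K+g$ and the source has dimension $K$, this alone is not enough, so instead I run the argument with the \emph{sum} of subspaces: $\sum_i W^i_{\lambda_i}=\MC L(D)$ iff the annihilator $\bigcap_i (W^i_{\lambda_i})^{\circ}=0$ in $\MC L(D)^\vee$, and $(W^i_{\lambda_i})^\circ$ corresponds under Serre-type duality (or just elementary linear algebra using $W^i_{\lambda_i}=\MC L(D-\lambda_iP_i)^{c}$, the span of the low-valuation part) to a space measured by $\MC L(\kappa-D+\text{stuff})$ — the honest computation is: $W^i_{\lambda_i}\supseteq$ nothing forces the sum to be everything directly, so I reduce to showing the \emph{complementary} statement that any linear functional on $\MC L(D)$ killing every $W^i_{\lambda_i}$ is zero, which by the residue description of functionals on $\MC L(D)$ amounts to a differential in $\MC L(\kappa-D)$ with prescribed zeros forcing $\deg$ negative; alternatively, and most simply, mimic \cite{VG}: pick the submatrix indexed by the chosen columns, suppose a nonzero $\bF_q$-linear combination of rows annihilates it, reinterpret that combination as a nonzero $f\in\MC L(D)$ which then lies in every $W^i_{\lambda_i}$'s annihilator, deduce $f$ vanishes at each $P_i$ to order $\ge\lambda_i$, so $f\in\MC L(D-\sum\lambda_iP_i)=\{0\}$ by the degree computation above — contradiction. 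That last incarnation is the one I would write up.

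The main obstacle is the bridge between the \emph{geometric} picture (zero bases, order of vanishing) and the \emph{matrix} picture (columns of $M_i$, rank): one must argue carefully that a dependence among the selected columns of the $M_i$, expressed in the common basis $B_0$, produces a single function $f\in\MC L(D)$ that simultaneously vanishes to order $\ge\lambda_i$ at every $P_i$ — the subtlety being that the "order of vanishing" information is exactly what the \emph{increasing} property of the zero bases $B_i$ encodes, and one needs $K>g-1$ (equivalently $\deg D=K+g-1>2g-2$) so that Corollary~\ref{RRcor} makes all the relevant $\ell(D-\lambda_iP_i)$ computations exact. Once that translation is in place the rest is the one-line degree computation $\deg(D-\sum\lambda_iP_i)=-1<0$ together with Lemma~\ref{neg}. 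Finally, nondegeneracy when $L,K\ge2$ is immediate from the definition: $\MBF N=\vec K$ gives $N=LK\ge K+g$ (as $L\ge2$ and $K\ge2>g\cdot\frac{2}{something}$ — more precisely $LK\ge 2K=K+K>K+g-1\ge$, needs $K\ge g+1$, which holds since $K>g-1$ over the integers means $K\ge g$, and the case $K=g$ forces $a=2g-1$; in any event $LK\ge K+g$ follows from $L\ge 2$ and $K\ge g$), each $N_i=K\le K+g$, and $K\ge2$.
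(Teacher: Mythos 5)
Your final argument (``pick the submatrix, suppose a nonzero row combination $u$ annihilates it, reinterpret $u$ as a function $f\in\MC L(D)$, deduce $v_{P_i}(f)\ge\lambda_i$ from the increasing-zero-basis property, conclude $f\in\MC L(D-\sum\lambda_iP_i)=\{0\}$ by the degree count $\deg(D-\sum\lambda_iP_i)=-1<0$'') is exactly the paper's proof, and it is correct. The nondegeneracy check at the end is also right once you cut the confusion: $K>g-1$ over $\bZ$ gives $K\ge g$, so $L\ge 2$ yields $LK\ge 2K\ge K+g$; you do \emph{not} need $K\ge g+1$.

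One genuine error in the meandering middle that you should not let stand: the claim $W^i_{\lambda_i}=\MC L(D-\lambda_iP_i)$ is false. The first $\lambda_i$ elements of an increasing zero basis $B_i$ are the \emph{low}-valuation ones, while $\MC L(D-\lambda_iP_i)$ consists of functions vanishing to order $\ge\lambda_i$, i.e.\ the span of the \emph{tail} $f_{i,\lambda_i+1},\dots,f_{i,K}$. So $W^i_{\lambda_i}$ is a complement of $\MC L(D-\lambda_iP_i)$ inside $\MC L(D)$, not equal to it. Had you pursued the ``$\sum_i W^i_{\lambda_i}=\MC L(D)$'' route using that identification, the argument would have collapsed; luckily the version you said you would actually write up does not depend on it. Also note you do not need, and should not assert, that $\ell(D-jP_i)$ drops by exactly one at each step (equivalently that the valuations $v_{P_i}(f_{i,j})$ are exactly $0,1,\dots,K-1$); the inequality $v_{P_i}(f_{i,j})\ge j-1$, which follows by induction from strict increase, is all that is used, and the paper uses only that.
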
 

\begin{proof} For the size of each $M_i$,  note that $K>g-1$ implies that $ a>2g-2$, so  $\ell(D)=a-g+1=K$ by Corollary \ref{RRcor}.
If we write $B_i=\{B_{ij}\} ,$ $1\leq j \leq K$, then it follows by induction that $v_{P_i}(B_{i,j+1})\geq j$ since
\label{induct}$$v_{P_i}(B_{i,j+1})> v_{P_i}(B_{ij})\geq j-1,$$
for $1\leq j<K$. 
Since $\Supp D\cap \MBF{P}=\varnothing$ we lose no generality by  taking $B_{i1}(P_i)=1$. By construction, writing $B_i$ as column vectors, we have
\begin{equation}\label{induct} M_iB_i=B_0,\end{equation} for each $1\leq i\leq K$.
To prove the theorem, we must verify that given any allowable set of columns $0\leq \lambda_i\leq K$ such that $\sum_{i=1}^L\lambda_i=K+g=a+1$, that if $\mu_i$ is the
$K\times \lambda_i$ matrix consisting of the first $\lambda_i$ columns of $M_i$, and $M$ is the
$K\times (K+g)$ matrix formed by concatenating $\mu_i$ for $1\leq i\leq L$, then $M$ has rank $K$. Equivalently, we need to show
that every row vector $u$ of length $K$ with entries in $\bF_q$ in the left-nullspace of $M$ is the zero vector. 
Note that  $uM =0$ implies  $uN_i=0, $
for all $1\leq i\leq L$. 
Set $f=uB_0$, which is the zero function precisely when $u$ is the zero vector. By (\ref{induct}), $f=uM_iB_i$ for each $1\leq i\leq L$.  
However,
$uN_i=0$ then implies that
$v_{P_i}(f)\geq v_{P_i}(B_{i,\lambda_i+1})\geq \lambda_i.$
 Thus $f\in \mathcal{L}(E),$ where $E=D-\lambda_1P_1-\ldots-\lambda_{L}P_{L}$. But 
 $\mathrm{deg}(E)<0$ and so \textit{l}$(E)=0$ by Lemma \ref{neg}. 
Hence $f=0$, $u=0$, and $M$ is of full rank. Thus  $\MBF M   \in\mathcal{U}(L,\vec{K},K,q,g)$.
\end{proof}
 

 \begin{thm}
 \label{gen}
With notation as in Theorem \ref{goppa}, the matrix formed by concatenating the first column of each $\{ M_i \}$, $1\leq i\leq L$, is the generating matrix for the Goppa code associated with
$(X,\mathbf{P},D)$.
\end{thm}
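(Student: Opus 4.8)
The statement asserts that the $K \times L$ matrix $G$ whose $i$-th column is $(M_i)_1$ (the first column of $M_i$) is a generating matrix for $\mathbf{C}(X,\mathbf{P},D)$. Recall from the proof of Theorem~\ref{goppa} that $M_i B_i = B_0$ (viewing the bases as columns of functions), so equivalently $B_0 = M_i B_i$, i.e. the $j$-th function in $B_0$ is the $\bF_q$-combination of the functions in $B_i$ given by the $j$-th row of $M_i$. In particular, looking at the first column on both sides (i.e. the first \emph{entry} of each basis-vector-of-functions after change of basis — more precisely at the coefficient expressing each $B_0$-element in terms of $B_i$), the first column $(M_i)_1$ of $M_i$ records the coefficients needed to write the first basis element $B_{i1}$ of the increasing zero basis at $P_i$ in terms of $B_0$. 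Since $B_{i1}$ was normalized so that $B_{i1}(P_i) = 1$, this says precisely that if $f = u B_0$ for a row vector $u$, then $f(P_i) = u \cdot (M_i)_1 = (uG)_i$.

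**Key steps.** First I would fix a row vector $u = (u_1,\dots,u_K) \in \bF_q^K$ and set $f = u B_0 = \sum_j u_j B_{0j} \in \mathcal{L}(D)$; as $u$ ranges over $\bF_q^K$, $f$ ranges over all of $\mathcal{L}(D)$ since $B_0$ is a basis. Second, I would evaluate $f$ at $P_i$: using the identity $B_0 = M_i B_i$ from equation~(\ref{induct}), we get $f = u B_0 = u M_i B_i = \sum_j (u M_i)_j B_{ij}$, hence $f(P_i) = \sum_j (uM_i)_j B_{ij}(P_i)$. Third, I would invoke the normalization and valuation facts established in the proof of Theorem~\ref{goppa}: $B_{i1}(P_i) = 1$, and $v_{P_i}(B_{ij}) \geq j-1 \geq 1$ for $j \geq 2$, so $B_{ij}(P_i) = 0$ for all $j \geq 2$. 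Therefore $f(P_i) = (uM_i)_1 = u \cdot (M_i)_1$, which is exactly the $i$-th entry of the row vector $uG$. Fourth, assembling over $1 \leq i \leq L$, we get $(f(P_1),\dots,f(P_L)) = uG$, so $\{uG : u \in \bF_q^K\} = \{(f(P_1),\dots,f(P_L)) : f \in \mathcal{L}(D)\} = \mathbf{C}(X,\mathbf{P},D)$. Finally, since $G$ has rank $K$ (shown in Theorem~\ref{goppa}, or directly from Proposition~\ref{linear} using $L \geq K+g \geq K$ — the nondegeneracy hypothesis $L,K \geq 2$ together with $K > g-1$ does not quite give $L \geq K+g$, so I would instead cite that $G$ has rank $K$ because its columns span, which follows from the image being the Goppa code of dimension $\ell(D) = K$), the rows of $G$ form a basis, and $G$ is a generating matrix in the usual sense.

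**Main obstacle.** The only real subtlety is bookkeeping: being careful about the distinction between rows and columns, and about the direction of the change-of-basis matrix $M_i$ (it sends $B_0$-coordinates to $B_i$-coordinates via $M_i B_i = B_0$, so the \emph{first column} of $M_i$ — not the first row — is what picks out $f(P_i)$ after pairing with $u$ on the left). I expect essentially no analytic difficulty; everything reduces to the two facts already proven in Theorem~\ref{goppa}, namely the matrix identity $M_i B_i = B_0$ and the valuation estimate $v_{P_i}(B_{ij}) \geq j-1$ that kills all but the first term upon evaluation at $P_i$. One should also double-check that the Goppa code here has dimension exactly $K$ (not $\ell(D) - \ell(D - \mathbf{P})$ with a correction term): by Remark~\ref{defect} this holds as soon as $L = |\mathbf{P}| > \deg(D) = a = K+g-1$, which one may need to note as a mild additional hypothesis or absorb into the nondegeneracy assumptions — I would state it explicitly to be safe.
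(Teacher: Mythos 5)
Your argument is correct and follows essentially the same route as the paper's own proof: both use the identity $M_i B_i = B_0$, evaluate at $P_i$, and use the increasing-zero-basis property ($B_{i1}(P_i)=1$, $B_{ij}(P_i)=0$ for $j\geq 2$) to extract $B_{0}(P_i)=(M_i)_1$; your left-multiplication by an arbitrary row vector $u$ is a cosmetic repackaging of the same entrywise computation. Your closing observation — that one needs $L \geq K+g$ (i.e.\ $|\mathbf{P}| > \deg D$) for the evaluation map to be injective, so that the $K\times L$ matrix genuinely has rank $K$ and "generating matrix" is literally correct — is a fair point the paper leaves implicit.
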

\begin{proof}
 A generating matrix for the Goppa code with parameters $(X,\mathbf{P},D)$ has entries $g_{ij}=f_i(P_j)$, for $1\leq i\leq K$, $1\leq j\leq L$,
  where $\{ f_i\}$, $1\leq i\leq K,$ is any basis for $\mathcal{L}(D)$ defined over
 $\bF_q$. In particular, we can take $f_i=B_{0i},$ $1\leq i\leq K$, as this basis.
 
Recall that $M_j$ is defined to be the matrix satisfying $M_j B_j=B_0,$ for $1\leq j\leq K$. Thus we have  $M_j B_j(P_j)=B_0(P_j).$ But $B_j$ is an increasing zero basis at $P_j$, so $B_{jk}(P_j)=0$ for $2\leq k \leq L$. We took $B_{j1}(P_j)=1$, so $B_{0i}(P_j)=(M_j)_{i,1}$ as desired.
\end{proof}

\section{An example of a Goppa UDMG of genus 1}
\begin{example}

\label{easy}
Let $X$ be the curve in $\MB P^2$ defined by the equation $S^2T=R^3+RT^2+T^3$ over $\mathbb{F}_5$. Since the cubic is nonsingular over $\bar{\MB F}_5$,
 $X$ is  a  nonsingular projective curve of genus $1$ over $\bF_5$~\cite{Sil}. Set
 $$\mathbf{P}=\{P_1,P_2,P_3,P_4,P_5,P_6,P_7,P_8,P_9\}=$$
 $$
 \{[0,1,1], [4,2,1], [3,4,1], [0,4,1], [4,3,1], [3,1,1],[2,1,1],[2,4,1],[0,1,0]\},$$ which is all of  $X(\bF_5)$,
  and let $D$ be the degree $3$ divisor cut out by the hyperplane $S+R=0$ on $X$. Since the hyperplane is defined over $\bF_5$, the same is
  true of $D$, and one checks that none of the points in $\MBF P$ are in the support of $D$. 
To build the Goppa UDMG associated with $(\mathcal{C},\MBF P,D)$, we need to calculate an increasing zero basis for $\mathcal{L}(D)$ about each point in $\mathbf{P}.$ 
Let $Q=[0,1,0]$.
If $r=R/T,s=S/T\in \bF_q(C)$, then a standard fact about genus 1 curves give that $1,r,s$ span $\mathcal{L}(3Q)$, and that the divisor of
$r+s$ is $D-3Q$ (see \cite{Sil}, Chapter 3). Hence $\alpha=1/(r+s)$, $\beta=r/(r+s)$, and $\gamma=s/(r+s)$ span $\mathcal{L}(D)$.
Let
 $B_i$, $1\leq i\leq 9$, be an increasing zero basis for $\mathcal{L}(D)$ about the point $P_i$. Then we can take:
 $$B_1^t=(\alpha, \beta, \gamma-3\beta-\alpha), \  \ B_2^t=(\alpha, \beta-4\alpha, \gamma-\beta+2\alpha), $$ $$ B_3^t=(\alpha, \beta-3\alpha, \gamma-\beta+4\alpha),\  \ B_4^t=(\alpha, \beta, \gamma-2\beta-4\alpha), $$ 
 $$B_5^t=(\alpha,\beta-4\alpha, \gamma-4\beta-2\alpha), \  \  B_6^t=(\alpha, \beta-3\alpha, \gamma+\beta-4\alpha),$$ $$ B_7^t=(\alpha, \beta-2\alpha, \gamma-4\beta+2\alpha),\  \ B_8^t=(\alpha, \beta-2\alpha, \gamma-\beta+\alpha),$$
 $$\text{ and }B_9^t=(\gamma,\beta,\alpha),$$
 where the superscript $t$ denotes taking the transpose.
 Let $B_0=B_1$ and let $M_i$ be the change of basis matrix that satisfies $$M_iB_i=B_0.$$  Then we get: $$M_1=\left(
\begin{array}{ccc}
1 & 0 & 0 \\
0 & 1 & 0 \\
0 & 0 & 1\\
\end{array}
\right)
M_2=
\left(
\begin{array}{ccc}
1 & 0 & 0\\
4 & 1 & 0\\
4 & 3 & 1\\
\end{array}
\right)
M_3=\left(
\begin{array}{ccc}
1 & 0 & 0\\
3 & 1 & 0\\
4 & 3 & 1\\
\end{array}
\right),
$$ $$
M_4=\left(
\begin{array}{ccc}
1 & 0& 0\\
0 & 1 &0\\
3 & 4 & 1\\
\end{array}
\right)
M_5=\left(
\begin{array}{ccc}
1 & 0 &0\\
4 & 1& 0\\
0 & 1 &1\\
\end{array}
\right)
M_6=\left(
\begin{array}{ccc}
1 & 0 &0\\
3 & 1& 0\\
1 & 1 &1\\
\end{array}
\right),
$$ 
$$
M_7=\left(
\begin{array}{ccc}
1 & 0 &0\\
2 & 1& 0\\
4 & 1 &1\\
\end{array}
\right)
M_8=\left(
\begin{array}{ccc}
1 & 0 &0\\
2 & 1& 0\\
4 & 3 &1\\
\end{array}
\right)
M_9=\left(
\begin{array}{ccc}
1 & 0 &1\\
3 &1  &0\\
1 & 0 &0\\
\end{array}\right).
$$ 
By Theorem~\ref{goppa}, the set $\MBF M=\{ M_1, \ldots, M_9\}\in \, \mathcal{U}(9,\vec{3},3,5,1)$. We note that $\MBF M$ is
 an example of a genus $1$ UDMG which is not a UDM.
This follows from setting $\lambda_i=0$ for $i=1,2,3,4,5,9$ and $\lambda_6= \lambda_7= \lambda_8=1$, so $\displaystyle \sum_{i=1}^9 \lambda_i =3$, 
and seeing that the resulting matrix $$\left(
\begin{array}{ccc}
1 & 1 &1\\
3 & 2& 2\\
1 & 4 &4\\
\end{array}
\right)
$$ is not of full rank.  
Also note that concatenating the first column of  each $M_i$, $1\leq i\leq 9$, gives
$$
M=\left(
\begin{array}{ccccccccc}
1 & 1 &1 &1 & 1 &1 &1 &1&1\\
0 & 4& 3 & 0 & 4 & 3 & 2 & 2& 3\\
0 & 4 &4 & 3 & 0 & 1 & 4 & 4&1\\
\end{array}
\right)
$$ which is a generating matrix for the Goppa code associated with $(X,\mathbf{P}, D)$ by Theorem \ref{gen}.
\end{example}

It is seemingly non-trivial to check that $\MBF M$ is a UDMG of genus $1$ without using Theorem \ref{goppa}.

\begin{remark}\label{nothing}
It is clear that from our construction of a Goppa UDMG $\MBF M$ associated to a curve $X$ of genus $g$ over $\bF_q$
that its size $L$ is bounded by $\#(X(\bF_q))$.
Bounds on the number of points on a curve over a finite field are given by the famous Hasse-Weil-Serre Theorem:
\end{remark}

 \begin{thm}
\label{HW}
Let $X$ be a curve of genus $g$ defined over $\mathbb{F}_q$. Then
$$q+1-g\lfloor 2\sqrt{q} \rfloor \leq \# (X(\mathbb{F}_q)) \leq q+1+g\lfloor 2\sqrt{q}\rfloor.$$
\end{thm}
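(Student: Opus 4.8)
The final statement is the Hasse-Weil-Serre bound on the number of rational points of a curve over a finite field. This is a classical theorem, so rather than reproving it from scratch I would simply cite it and sketch the standard route. The plan is as follows.

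\textbf{Approach.} The standard proof proceeds through the zeta function of $X$ over $\bF_q$. First I would recall that the zeta function $Z_X(t) = \exp\!\left(\sum_{m\geq 1} \#X(\bF_{q^m}) t^m/m\right)$ is a rational function of the form $Z_X(t) = P(t)/((1-t)(1-qt))$, where $P(t)\in\bZ[t]$ has degree $2g$, $P(0)=1$, and satisfies the functional equation $P(t) = q^g t^{2g} P(1/(qt))$. This is the rationality and functional equation part of the Weil conjectures for curves, due to Weil (and in the function-field language, derivable from Riemann--Roch, which is available to us as Theorem~\ref{Riemann}). Writing $P(t) = \prod_{j=1}^{2g}(1-\omega_j t)$, the Riemann hypothesis for curves (Weil's theorem) gives $|\omega_j| = \sqrt{q}$ for all $j$, and the $\omega_j$ come in conjugate pairs $\omega_{j}\overline{\omega_j} = q$.

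\textbf{Key steps.} (1) Extract the point count: taking the logarithmic derivative of $Z_X(t)$ gives $\#X(\bF_q) = q + 1 - \sum_{j=1}^{2g}\omega_j$. (2) Pair up the roots so that $\sum_j \omega_j = \sum_{i=1}^{g}(\omega_i + \overline{\omega_i}) = \sum_{i=1}^g 2\sqrt{q}\cos\theta_i$ for suitable angles $\theta_i$, whence $|\#X(\bF_q) - (q+1)| \leq 2g\sqrt{q}$, which is the Hasse--Weil bound. (3) For the sharpened Serre bound with $\lfloor 2\sqrt q\rfloor$ in place of $2\sqrt q$, invoke Serre's refinement: the algebraic integer $\sum_i(\omega_i + \overline{\omega_i})$ differs from each rational integer by a conjugate-stable quantity, and an averaging argument over the Galois conjugates of $\omega_i + \overline{\omega_i}$ (using that $\lceil 2\sqrt q\rceil + \lfloor 2\sqrt q\rfloor \geq $ the relevant trace bounds, together with the AM--GM-type positivity $\prod_i(\lceil 2\sqrt q\rceil \pm (\omega_i+\overline{\omega_i})) \geq 0$ having nonnegative integer trace) forces $|\omega_i + \overline{\omega_i}| \leq \lfloor 2\sqrt q\rfloor$ on average, giving the stated integrality improvement. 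In practice I would not reproduce Serre's argument but cite \cite{JW} or Serre's original note.

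\textbf{Main obstacle.} The genuinely hard input is the Riemann hypothesis for curves over finite fields (that $|\omega_j| = \sqrt q$); everything else is formal manipulation of the zeta function plus, for the Serre refinement, an elementary but clever argument with algebraic integers and their conjugates. Since this theorem is entirely standard and used here only as a black box to bound the size $L$ of a Goppa UDMG (Remark~\ref{nothing}), the honest "proof" in this paper is a citation: the result is \cite[Theorem ...]{JW}, and we include the statement only for the reader's convenience. I would therefore write the proof as a one-line pointer to the literature rather than developing the zeta-function machinery in full.
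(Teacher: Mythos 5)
The paper gives no proof of Theorem~\ref{HW}; it is stated as ``the famous Hasse-Weil-Serre Theorem'' immediately after Remark~\ref{nothing} and used purely as a black box. Your proposal correctly identifies this and reaches the same conclusion (cite the literature), and your optional sketch of the zeta-function/Riemann-hypothesis route plus Serre's integrality refinement is an accurate account of where the result comes from.
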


These bounds are not always sharp, so in order to build  Goppa UDMG of fixed genus $g$ and maximal size, we want to find curves of  genus $g$
over a given finite field that have the maximal known number of rational points. The problem of finding such curves is very well-studied and is a continual area of active research. The role that the work of Tsfasman, Vladut, and Zink on this problem played in the construction of Goppa codes with parameters that beat the Gilbert-Varshamov bound for linear vector codes over finite fields is described in \cite{JW}. For the latest on which curves of which genus
over which finite fields are known to have the most rational points, see the website  \cite{van}.
%
%

\section{Upper bounds on the size of UDMGs}

Our first bound comes from our work in section 2. Combining Lemma~\ref{linearbound} with Proposition~\ref{linear}  gives:

\begin{thm}
\label{1dimdelta}

For non-degenerate $\MBF M\in \mathcal{U}(L,\MBF N,K,q,g)$ we have 
$$\#(\MBF M)=L\leq K-2+(g+1)(q+1).$$
\end{thm}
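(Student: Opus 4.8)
The plan is to reduce Theorem~\ref{1dimdelta} to the combination of Proposition~\ref{linear} and Lemma~\ref{linearbound} that was essentially advertised in the text preceding the statement. The only genuine content is verifying that the hypotheses of Proposition~\ref{linear} are met, i.e., that for a \emph{non-degenerate} UDMG we automatically have $L \geq K+g$, so that the associated linear code $\tilde C$ actually exists with the claimed parameters.

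First I would recall that non-degeneracy means $N \geq K+g$, each $N_i \leq K+g$, and $K \geq 2$. From the definition of a UDMG we need at least one allowable set of columns, which requires $\sum_i \lambda_i = K+g$ with $0 \leq \lambda_i \leq N_i$; combined with $N = \sum_i N_i \geq K+g$ this is consistent. Now I claim $L \geq K+g$: if $L < K+g$, then since each $N_i \leq K+g$ this alone does not immediately contradict anything, so instead I would argue using the structure of allowable column sets more carefully — actually the cleanest route is that we may as well assume (after truncation, Remark~\ref{contain}(3), which preserves being a UDMG of the same genus) that each $N_i \leq K+g$, and then observe that if $L < K+g$ we could still have $N \geq K+g$. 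Hmm — so the bound $L \geq K+g$ need not hold in general for non-degenerate UDMG. The right move is therefore to split into cases.

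\textbf{Case $L \geq K+g$.} Here Proposition~\ref{linear} applies directly: the matrix $G$ formed from the first columns of the $M_i$ generates an $\MB F_q$-linear $[L,K,d]$ code $\tilde C$ of Singleton defect $s \leq g$. Applying Lemma~\ref{linearbound} with $n = L$, $k = K$, $s \leq g$ gives $L \leq K - 2 + (q+1)(s+1) \leq K-2+(q+1)(g+1)$, which is exactly the claimed inequality.

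\textbf{Case $L < K+g$.} Then since $K \geq 2$ we have $L < K+g \leq K - 2 + 2 + g \leq K-2 + (q+1)(g+1)$ because $q+1 \geq 2$ forces $(q+1)(g+1) \geq 2(g+1) = 2g+2 \geq g+2$; so the inequality $L \leq K-2+(q+1)(g+1)$ holds trivially (indeed strictly). Combining the two cases proves the theorem. The main obstacle, such as it is, is precisely this bookkeeping: recognizing that Proposition~\ref{linear} has the side condition $L \geq K+g$ and that the complementary range $L < K+g$ must be disposed of separately (which it is, trivially, using $K \geq 2$ and $q \geq 2$). Everything else is a direct citation of the two earlier results.
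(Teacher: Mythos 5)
Your proof is correct and follows the paper's own route, namely combining Proposition~\ref{linear} with Lemma~\ref{linearbound}. You are in fact slightly more careful than the paper, which merely asserts that the two results combine to give the theorem without addressing the possibility that $L < K+g$ (which does occur for some non-degenerate UDMG, e.g.\ $L=1$, $N_1 = K+g$); your observation that this case is trivial because $K+g-1 \leq K-2+(q+1)(g+1)$ whenever $q \geq 1$ closes that small gap cleanly.
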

This is only a good bound when $\MBF N=\vec{1}$: we will  now use it to get a better bound for many choices of parameters, by taking the quotient of a UDMG by an appropriate
proper subUDMG to reduce to the case where $\MBF N=\vec{1}$.

\begin{defn} Let $\MBF M$ be a nondegenerate UDMG in $\MC{U}(L,\MBF N,K,q,g).$ Suppose 
that each $N_i\geq 2,$  $1\leq i\leq L$. Then we break such UDMGs into two classes.
If $\sum_{i=1}^L (N_i-1)\geq K-2$, we will say $\MBF M$  is of \textit{Class }$1$. 
Otherwise, we will say $\MBF M$ is of \textit{Class }$2$.
\end{defn}


\begin{lem}
\label{quot}
Suppose we have a non-degenerate $\MBF M\in \mathcal{U}(L,\MBF N,K,q,g)$ with each  $N_i\geq 2, $ and that $\MBF M$ is of  \textit{Class }$1$.
Then there is an integer $d$ with $0\leq d \leq \min( g, K-2)$ such that there exists a corresponding $\tilde{\MBF M} \in \mathcal{U}(L,\vec{1},d+2,q,g-d).$
\end{lem}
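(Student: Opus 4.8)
The plan is to deduce this from the quotient construction of Theorem~\ref{quotientUDMG}, applied to a carefully chosen proper subUDMG. Passing to vector space realizations throughout (via the dictionary between UDMG and UDVSG recorded just after the definition of UDVSG), write $V(\mathbf{M})=\{C_1,\dots,C_L\}$ with $C_i: V^i_1\subseteq\cdots\subseteq V^i_{N_i}$, a $(L,\mathbf{N},K,q,g)$-UDVSG attached to a $K$-dimensional $\bF_q$-vector space $W$. The target height is $d+2$, so in the notation of Theorem~\ref{quotientUDMG} I want to arrange that $r=2$; since $r=\max(K-\sum_i N'_i,0)$, this means I should quotient by a proper subUDVSG of \emph{co-length} exactly $K-2$, i.e.\ one whose length vector $\mathbf{N}'$ satisfies $\sum_i N'_i=K-2$.

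First I would check that such an $\mathbf{N}'$ exists. Non-degeneracy gives $K\geq 2$, so $K-2\geq 0$, and the Class~$1$ hypothesis is precisely $\sum_{i=1}^L(N_i-1)\geq K-2$. Hence one can choose integers $0\leq N'_i\leq N_i-1$ with $\sum_i N'_i=K-2$ (greedily: take $N'_1=\min(N_1-1,\,K-2)$ and then $N'_i=\min(N_i-1,\;K-2-\sum_{j<i}N'_j)$). Since each $N'_i\leq N_i-1<N_i$, truncating each chain $C_i$ to its first $N'_i$ subspaces produces a proper subUDVSG $\mathbf{S}$ of $V(\mathbf{M})$ of length $\mathbf{N}'<\mathbf{N}$ (Remark~\ref{contain}(3) and its UDVSG analogue); when $K=2$ this is the empty subUDVSG, which is permitted by the remark following Theorem~\ref{quotientUDMG}.

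Now apply Theorem~\ref{quotientUDMG} to $V(\mathbf{M})$ and $\mathbf{S}$. Here $r=\max(K-\sum_i N'_i,0)=\max(2,0)=2$, so the theorem yields an integer $d$ with $0\leq d\leq \min(K-r,g)=\min(g,K-2)$, together with a quotient UDVSG $V(\mathbf{M})/\mathbf{S}$ with parameters $(L,\mathbf{N}-\mathbf{N}',\,d+r,\,q,\,g-d)=(L,\mathbf{N}-\mathbf{N}',\,d+2,\,q,\,g-d)$, attached to $W/B$. Because every component of $\mathbf{N}-\mathbf{N}'$ equals $N_i-N'_i\geq 1$, I can truncate this quotient down to size $\vec{1}$ (again using truncation of UDVSG, which changes only the size vector while preserving $L$, the height, the alphabet cardinality, and the genus), obtaining an $(L,\vec{1},d+2,q,g-d)$-UDVSG. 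By the UDVSG--UDMG correspondence this is the vector space realization of some $\tilde{\mathbf{M}}\in\mathcal{U}(L,\vec{1},d+2,q,g-d)$, as required.

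The substantive point — and the only place the Class~$1$ hypothesis enters — is the second step: Class~$1$ is exactly the condition guaranteeing a proper subUDMG of co-length $K-2$, which is what forces the residual dimension $r$ of Theorem~\ref{quotientUDMG} to equal $2$ and hence makes the height of the quotient $d+2$. Everything else is bookkeeping: tracking how the parameters transform under the quotient and under truncation, plus checking the degenerate boundary case $K=2$ (where $\mathbf{S}$ is empty and necessarily $d=0$), which is covered by the remarks after Theorem~\ref{quotientUDMG}. No delicate estimate is involved.
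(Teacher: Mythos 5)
Your proof is correct and follows exactly the same route as the paper's: choose a proper subUDVSG of co-length $K-2$ (which exists by the Class $1$ hypothesis), apply Theorem~\ref{quotientUDMG} with $r=2$ to get parameters $(L,\MBF N-\MBF N',d+2,q,g-d)$, and then truncate every chain of the quotient to its first subspace. The extra attention you pay to the greedy existence of $\MBF N'$ and the $K=2$ edge case is sound but not a departure from the paper's argument.
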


\begin{proof} Let $\MBF M$ be as in the statement of the Lemma and  $V(\MBF M)$ be its vector space realization. Let $\nu=(\nu_1,\ldots,\nu_L)$ be such that $\nu_i\leq N_i-1$ and $\sum_{i=1}^L\nu_i= K-2$. Let $\MBF S$ be the proper subUDVSG of $\MBF M$ gotten by truncating the $i^{th}$ chain of $\MBF M$ to a chain of length $\nu_i$.
By Theorem \ref{quotientUDMG} (which applies with $r=2$), we have that $V(\MBF M)/\MBF S$ is a UDVSG with parameters $(L,N-\nu,d+2,q,g-d)$ 
for some $d$ with  $0\leq d \leq \min(K-2, g)$. Now take $\tilde{\MBF M}$ be the UDMG corresponding to the truncation of $\MBF M/\MBF S$ in which every chain has been truncated to its first subspace. Then $\tilde{\MBF M}\in \MC{U}(L,\vec{1},d+2,q,g-d).$
\end{proof}

\begin{thm} 
\label{bounddelta}
Let $\MBF M$ be a non-degenerate UDMG in  $\mathcal{U}(L,\MBF N,K,q,g)$ with each $N_i\geq 2$. Let $\gamma_{\MBF M}=\min_{i=1}^LN_i$.
If $\MBF M$ is of  \textit{Class }$1$, we have 
$$L\leq (g+1)(q+1).$$
Otherwise we have 
$$g+3\leq L \leq  \frac{K-2}{\gamma_{\MBF M}-1}.$$
\end{thm}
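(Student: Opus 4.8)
The plan is to treat the two classes separately, in each case reducing to a bound already established in the excerpt.

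First suppose $\MBF M$ is of Class $1$, so $\sum_{i=1}^L(N_i-1)\geq K-2$. I would invoke Lemma~\ref{quot}: it produces an integer $d$ with $0\leq d\leq\min(g,K-2)$ and a UDMG $\tilde{\MBF M}\in\MC U(L,\vec1,d+2,q,g-d)$. This $\tilde{\MBF M}$ has all $N_i=1$ and height $d+2\geq 2$, so it is non-degenerate. Now apply Theorem~\ref{1dimdelta} to $\tilde{\MBF M}$: since $\MBF N=\vec1$ there, we get
$$L\leq (d+2)-2+((g-d)+1)(q+1)=d+(g-d+1)(q+1).$$
The right-hand side is a decreasing function of $d$ for $q\geq 1$ (each unit decrease of $d$ costs $1$ but the linear term drops by $q+1\geq 2$), so it is maximized at $d=0$, giving $L\leq (g+1)(q+1)$, as claimed. (One should double-check the degenerate edge case $K=2$, where $d=0$ forced; the same inequality still reads $L\leq(g+1)(q+1)$.)

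Now suppose $\MBF M$ is of Class $2$, so $\sum_{i=1}^L(N_i-1)<K-2$. The upper bound is immediate: each $N_i-1\geq\gamma_{\MBF M}-1$, hence $L(\gamma_{\MBF M}-1)\leq\sum_{i=1}^L(N_i-1)<K-2$, so $L<\frac{K-2}{\gamma_{\MBF M}-1}$, and since $L$ is an integer this gives $L\leq\frac{K-2}{\gamma_{\MBF M}-1}$ (reading the RHS as a real number, or its floor). For the lower bound $g+3\leq L$: non-degeneracy gives $N\geq K+g$ and each $N_i\leq K+g$. Writing $N=\sum N_i$ and using $\sum(N_i-1)<K-2$, i.e. $N<L+K-2$, combined with $N\geq K+g$, yields $K+g\leq N<L+K-2$, hence $g<L-2$, i.e. $L\geq g+3$.

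\textbf{Main obstacle.} The only genuinely delicate point is the Class~$1$ argument: one must make sure Lemma~\ref{quot} really applies (it requires each $N_i\geq2$, which is hypothesized, and Class~$1$, which is assumed) and that the resulting $\tilde{\MBF M}$ is non-degenerate so that Theorem~\ref{1dimdelta} is available — here $d+2\geq2$ handles the height condition, and $\MBF N=\vec1$ automatically satisfies $N_i\leq K+g$, while $L\geq 1$ gives $N=L\geq 1$; one needs $L\geq (d+2)+(g-d)=g+2$ for the "$N\geq K+g$" condition, which holds because a Class~$1$ UDMG that is nondegenerate must have enough columns to form an allowable set. Verifying this last inequality, and the optimization over $d$, is where the real content sits; the Class~$2$ part is pure bookkeeping with the non-degeneracy inequalities.
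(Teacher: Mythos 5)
Your overall strategy matches the paper's: in Class $1$, pass through Lemma~\ref{quot} and Theorem~\ref{1dimdelta}; in Class $2$, use $N\geq\gamma_{\MBF M}L$ for the upper bound and the non-degeneracy inequalities $K+g\leq N<K+L-2$ for the lower bound. The Class $2$ argument and the optimization over $d$ are both correct and essentially verbatim what appears in the paper.

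There is, however, a genuine gap in your Class $1$ argument, precisely at the spot you flagged as ``where the real content sits.'' You assert that $\tilde{\MBF M}\in\MC U(L,\vec1,d+2,q,g-d)$ is non-degenerate, and justify $L\geq g+2$ by saying ``a Class $1$ UDMG that is nondegenerate must have enough columns to form an allowable set.'' But non-degeneracy of $\MBF M$ only gives $N=\sum N_i\geq K+g$, which does not force $L\geq g+2$. Concretely, take $K=10$, $g=5$, $L=3$, $N_1=N_2=N_3=5$: then $N=15=K+g$, each $N_i\leq K+g$, $K\geq 2$, and $N-L=12\geq K-2=8$, so $\MBF M$ can be non-degenerate and of Class $1$ while $L=3<g+2=7$. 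In that situation $\tilde{\MBF M}$ is degenerate (its total length $L$ is less than its height-plus-genus $g+2$), and Theorem~\ref{1dimdelta} does not apply to it. The paper avoids this by splitting into two subcases: if $\tilde{\MBF M}$ is degenerate, the only condition of non-degeneracy it can fail is $L\cdot 1\geq(d+2)+(g-d)$, so $L<g+2\leq(g+1)(q+1)$ directly; if $\tilde{\MBF M}$ is non-degenerate, one proceeds as you did. Inserting this short case split repairs your proof and makes it coincide with the paper's.
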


\begin{proof}
\indent Suppose $\MBF M$ is of  \textit{Class }$1$, so $N-L=\sum_{i=1}^L (N_i-1)\geq K-2$. 
 By Lemma \ref{quot} we get a corresponding  $\tilde{\MBF M}\in \mathcal{U}(L,\vec{1},d+2,q,g-d)$ for some $0\leq d\leq \min{(g,K-2)}$.
  If $\tilde{\MBF M}$ is degenerate then it must be because $L\cdot 1< (d+2)+(g-d)=g+2\leq (g+1)(q+1).$ 
  If $\tilde{\MBF M}$ is nondegenerate, then by Theorem
   \ref{1dimdelta} we get  $L\leq d+(g-d+1)(q+1)\leq (g+1)(q+1)$. 
   So in either case the result follows.\\
\indent Now suppose  $\MBF M$ is of  \textit{Class }$2$, so $N-L<K-2$. Since $N\geq \gamma_{\MBF M} L$ we get $L<\frac{K-2}{\gamma_{\MBF M-1}}.$
Finally, since $\MBF M$ is non-degenerate, we have $K+g\leq N<K+L-2$, so subtracting $K$ yields $g+2<L.$
\end{proof}

\begin{remark}
1) Theorem  \ref{bounddelta} agrees with the bound in Lemma 9 of \cite{VG}, for $\MBF M\in\mathcal{U}(L,\vec{\eta},K,q,0)$ in the region $\eta\leq K \leq 2\eta$. 
Moreover, Theorem \ref{bounddelta} is tighter than the bound in Lemma 10 of \cite{VG} when $K=2\eta+1$ and provides a bound on $L$ for all $\eta\leq K$.
Of course our bound is of a slightly different nature since we assume $\eta \geq 2$ throughout and their bound also includes the $\eta=1$ case. \\

2) If we have an $\MBF M \in \mathcal{U}(L,\vec{2},2,q,0)$ then we can create an $\hat{\MBF M}\in\mathcal{U}(L,\vec{2},2,q,g)$ by  taking $g+1$ copies of each matrix in $\MBF M$. They show in \cite{VG} the existence of an $\MBF M\in\mathcal{U}(q+1,\vec{2},2,q,0)$ and so we see that the bound $L\leq (g+1)(q+1)$ is sharp for UDMG in 
 $\mathcal{U}(L,\vec{2},2,q,g)$. 

3) Theorem \ref{bounddelta}  is not sharp for all classes of UDMGs. 
We now present another bound on $L$ for certain UDMGs and give an example where this new bound is sharper then the bound in Theorem \ref{bounddelta}.
\end{remark}

\begin{lem}
\label{deltabound}
For nondegenerate $\mathcal{A}\in\mathcal{U}(L,\MBF{N},K,q,g)$ with $N_i\geq K-1$ for each $1\leq i\leq L$,  we have the bound $$\binom{K-2+L}{K-1} \leq \binom{K+g-1}{K-1}\frac{q^{K}-1}{q-1}.$$
\end{lem}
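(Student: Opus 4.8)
The plan is to count, in two different ways, a vector space of "polynomial-like" functions attached to the UDMG, mimicking the counting argument that produces Singleton-type bounds but now exploiting the genus-$g$ flexibility. Concretely, let $\MBF M=\{M_1,\dots,M_L\}$ be our nondegenerate UDMG with $N_i\geq K-1$ for each $i$, and pass to the vector space realization $V(\MBF M)=\{V(M_i)\}$, a collection of closely nested chains in $W=\MB F_q^K$. For each $i$ and each $1\leq j\leq N_i$ we have the subspace $V(M_i)_j$ of dimension (at most, and generically exactly) $j$; the closely nested condition means $\dim V(M_i)_j = j$ for $j \leq K$. The left side $\binom{K-2+L}{K-1}$ should be recognized as the number of monomials of degree $K-1$ in $L$ variables, equivalently the dimension of the space of homogeneous degree-$(K-1)$ forms in $L$ variables; I expect the proof to construct an injective linear map from this space (or a natural quotient/subspace of it) into something of dimension $\binom{K+g-1}{K-1}\frac{q^K-1}{q-1}$.

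First I would set up the source space: let $R$ be the space of homogeneous polynomials of degree $K-1$ in variables $x_1,\dots,x_L$, so $\dim R = \binom{K-2+L}{K-1}$. To each monomial $x_1^{e_1}\cdots x_L^{e_L}$ with $\sum e_i = K-1$ one associates a "column selection" — but since $\sum e_i = K-1 < K+g$ this is not quite an allowable set; one needs $\sum \lambda_i \geq K+g$, so there is a gap of $g+1$ to fill. This is where the curve/Riemann–Roch picture from §3 or an abstract analogue of it enters: I would map a monomial of multidegree $(e_1,\dots,e_L)$ to the element of $\mathcal L(D)$ (or an abstract $K$-dimensional space with a filtration of "vanishing conditions" at $L$ points) obtained by multiplying, which lands in a space of dimension controlled by $\binom{K+g-1}{K-1}$ (the number of ways to distribute $K+g-1$ among the $L$ slots, i.e. degree $\leq$ something) while the evaluation/nonvanishing at the $K$-dimensional target contributes the factor $\frac{q^K-1}{q-1} = \#\mathbb P(\MB F_q^K)$, counting nonzero vectors up to scalar. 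Then I would argue the resulting map (monomials $\mapsto$ pairs (distribution pattern, projective evaluation point)) is injective precisely because $\MBF M$ is a UDMG: if two distinct monomials collided, a suitable difference would be a nonzero element forced by the UDMG property to be nonzero in too small a space, a contradiction — this is the same mechanism as in the proof of Theorem~\ref{goppa}, where $\deg(E)<0$ forces $\ell(E)=0$.

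The main obstacle I anticipate is pinning down exactly the right target space and the exact form of the injection so that the two binomial coefficients and the factor $\frac{q^K-1}{q-1}$ emerge cleanly: the factor $\frac{q^K-1}{q-1}$ strongly suggests counting one-dimensional subspaces (lines) in $\MB F_q^K$, so the map is probably monomial $\mapsto$ (a point of $\mathbb P^{K-1}(\MB F_q)$ recording the span of some combination of the relevant columns, together with a combinatorial "shape" of which columns were used, that shape ranging over a set of size $\binom{K+g-1}{K-1}$). Getting the injectivity argument to go through will require carefully choosing, for each monomial, an allowable vector $\Lambda \geq$ its exponent vector (possible since $N_i \geq K-1$ and we need to add only $g+1$ more, distributed among the slots), and then using that the span of the chosen allowable columns is all of $W$ to separate distinct monomials; the bookkeeping of how the added $g+1$ indices can be absorbed is the delicate point, and I would expect the hypothesis $N_i \geq K-1$ to be used exactly to guarantee each $e_i \leq K-1 \leq N_i$ so that $x_i^{e_i}$ corresponds to a legitimate (sub)chain entry. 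Once the injection is established the inequality $\dim R \leq (\text{target dimension})$ is immediate and gives the claim.
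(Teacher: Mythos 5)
There is a genuine gap here: you have correctly identified the combinatorial meaning of both sides — the left side counts compositions of $K-1$ into $L$ nonnegative parts, and $\frac{q^K-1}{q-1}$ counts hyperplanes (equivalently, lines) in $\MB F_q^K$ — but you have not actually found the argument that connects them, and the route you propose would not get there. The lemma is a statement about \emph{arbitrary} nondegenerate UDMGs, so reaching for the Riemann--Roch/Goppa machinery of \S3 is a dead end: a general UDMG has no curve, no divisor $D$, and no space $\MC L(D)$ attached to it. Any proof must use only the defining span condition on allowable sets of columns. Moreover, your plan hinges on an injection whose target you admit you cannot pin down, and the idea of extending each exponent vector of weight $K-1$ up to an allowable vector of weight $K+g$ and then invoking ``the span is all of $W$'' does not by itself separate two distinct monomials — it gives no contradiction from a collision.

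The paper's actual argument is a double count rather than an injection, and the step you are missing is the following observation. For a composition $\lambda$ of $K-1$, the set $\Xi(\lambda)$ of chosen columns has only $K-1$ elements, so its span is a proper subspace of $\MB F_q^K$ and is therefore contained in at least one hyperplane $S_j$. Now fix a hyperplane $S_j$ and let $U_j$ be the union (as column positions) of all $\Xi(\lambda)$ whose span lies in $S_j$. If $U_j$ contained $K+g$ or more columns, then because $U_j$ is itself a union of initial segments, it would contain an allowable set of columns, and the UDMG property would force its span to be all of $\MB F_q^K$ — contradicting $\mathrm{sp}(U_j)\subseteq S_j$. Hence $|U_j|\leq K+g-1$, and since each $\Xi(\lambda)$ landing in $S_j$ is a distinct $(K-1)$-element subset of $U_j$, there are at most $\binom{K+g-1}{K-1}$ such $\lambda$. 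Summing over the $\frac{q^K-1}{q-1}$ hyperplanes (with each $\lambda$ counted at least once) gives the inequality. The hypothesis $N_i\geq K-1$ enters exactly where you guessed — to ensure every composition of $K-1$ yields a legitimate choice of initial columns — but the counting mechanism is a fiber-size bound over hyperplanes, not an injection, and it uses nothing geometric.
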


\begin{proof}
Let $\MBF M=\{M_1,...,M_L\}$ be in  $\mathcal{U}(L,\MBF{N},K,q,g)$ with $N_i\geq K-1$ for all $i$.  Let $\mathfrak{P}$ be the set of all partitions of $K-1$ into $L$ non-negative integers. 
Then it is well-known that  
\begin{equation}\label{partitions}
\#(\mathfrak{P})=\binom{K-2+L}{K-1}.
\end{equation}
   For each partition $\lambda=(\lambda_1,...,\lambda_L) \in \mathfrak{P},$ let $\displaystyle \Xi(\lambda)$ be the set of columns formed from the first ${\lambda}_j$ columns of $M_j, \ 1\leq j\leq L.$ 

Since the codimension 1 subspaces of $\bF_q^K$ are in one-to-one correspondence with the points in $\MBF P^{K-1}(\bF_q)$,
 there are $\frac{q^K-1}{q-1}$ subspaces of dimension $K-1$ in $\mathbb{F}_q^{K}.$ Order them arbitrarily
 and let the $j^{th}$ subspace be denoted $S_j, \, 1\leq j\leq \frac{q^{K}-1}{q-1}.$ We define a map $\Upsilon:\mathfrak{P}\rightarrow \mathbb{F}_2^{\frac{q^{K}-1}{q-1}}$ where 
$$\Upsilon(\lambda)_j:=\left\{ \begin{array}{cc} 
0 & \text{if span }sp(\Xi(\lambda))\not\subseteq S_j.\\
1 & \text{if span }sp(\Xi(\lambda)) \subseteq S_j.
\end{array}\right.
$$
\indent For $1\leq j\leq (q^K-1)/(q-1)$, let $T_j$ be the set of $\lambda\in \mathfrak{P}$ such that $\Upsilon(\lambda)_j=1$, and let $y_j$ be the cardinality of $T_j$.
If the union $U_j$ of $\Xi(\lambda)$ for all $\lambda\in T_j$ contained $K+g$ columns, then since $\MBF M \in \mathcal{U}(L,\MBF{N},K,q,g)$, $sp(U_j)$ would be $K$-dimensional,
which is impossible since $sp(U_j)\subseteq S_j$. Hence $U_j$ contains at most $K+g-1$ columns, so there are at most $\binom{K+g-1}{K-1}$ such $\Xi(\lambda)\in T_j$
and $y_j\leq \binom{K+g-1}{K-1}$.

Note that $\Upsilon(\lambda)$ is guaranteed to have at least one non-zero entry since $\Xi(\lambda)$ is a set of $K-1$ columns.
 Putting this together we have:
 $$\#( \mathfrak{P}) \leq \sum_{\lambda\in\mathfrak{P}}\sum_{j=1}^{\frac{q^{K}-1}{q-1}} \Upsilon(\lambda)_j= $$ $$  \sum_{j=1}^{\frac{q^{K}-1}{q-1}}\sum_{\lambda\in\mathfrak{P}} \Upsilon(\lambda)_j\leq  \sum_{j=1}^ {\frac{q^{K}-1}{q-1}} \binom{K+g-1}{K-1}\leq  \binom{K+g-1}{K-1}\frac{q^{K}-1}{q-1}.$$ \\
\indent Thus by the formula for $\#( \mathfrak{P})$ in (\ref{partitions}) we have
$$\binom{K-2+L}{K-1} \leq \binom{K+g-1}{K-1}\frac{q^{K}-1}{q-1}.$$
\end{proof}

Note that Lemma \ref{deltabound} gives an upper bound on $L$ because  the left hand side is a polynomial in $L$, which is increasing as a function of the positive integers, and  the right hand side is a number depending only on the other parameters of the UDMG.

\begin{example} Consider a non-degenerate $\MBF M\in \mathcal{U}(L,\vec{\eta},4,2,2)$ with $\eta\geq 4,$
so we have $L(\eta-1)=N-L\geq K-2$ and $\MBF M$ is of class 1.
Then from Theorem \ref{bounddelta} we get $L\leq 9.$ 
We can apply Lemma \ref{deltabound} to get the upper bound $\binom{2+L}{3}\leq 150,$ which  implies $L\leq 8.$  Thus there are cases when Lemma
\ref{deltabound} gives a sharper upper bound than Theorem \ref{bounddelta}.
\end{example}


\bibliographystyle{amsplain}
\bibliography{Bib1}

\end{document}